\numberwithin{equation}{section}
\theoremstyle{plain}
\newtheorem{theorem}{Theorem}[section]
\newtheorem{remark}[theorem]{Remark}
\newcommand{\E}{\mathbb{E}}
\title{The microstructure of high frequency markets}
\date{August 24, 2014}
\author{Ren\'e Carmona and Kevin Webster}
\address{ORFE, Bendheim Center for Finance\\
Princeton University}
\begin{document}

\maketitle

\begin{abstract}
We present a novel approach to describing the microstructure of high frequency trading using two key elements. First we introduce a new notion of informed trader which we starkly contrast to current \emph{informed trader} models. We describe the exact nature of the `superior information' high frequency traders have access to, and how these agents differ from the more standard `insider traders' described in past papers. This then leads to a model and an empirical analysis of the data which strongly supports our claims. The second key element is a rigorous description of clearing conditions on a limit order book and how to derive correct formulas for such a market. From a theoretical point of view, this allows the exact identification of two frictions in the market, one of which is intimately linked to our notion of `superior information'. Empirically, we show that ignoring these frictions can misrepresent the wealth exchanged on the market by $50\%$. Finally, we showcase two applications of our approach: we measure the profits made by high frequency traders on NASDAQ and re-visit the standard Black - Scholes model to determine how trading frictions alter the delta-hedging strategy.
\end{abstract}

\section{Introduction}
It is impossible to begin a paper on informed trading and the microstructure of markets without citing the fundamental work of Kyle. This is very fortunate, as Kyle's model is a perfect example of an elegant yet sophisticated model that describes with great accuracy an important phenomenon in financial markets. It is also rather surprising, as the paper was written nearly three decades ago, when the agents and microstructure of markets was very different from that of modern days.

Our paper is a tribute to Kyle's model. Rather than apply the original findings of his ground-breaking paper in circumstances that it did not seek to address, we actively engage with it to reflect the advent of high frequency trading. This allows us to identify the main common points and differences between the world of 1985 and 2014. The common points are adverse selection, the impact of informed trades on prices and the ensuing information rent. The differences are the \emph{nature} of the informed agents, their trading strategies and the impact they have on the financial system.

At the heart of the discussion are two key notions: information and trading. The combination of the two leads to adverse selection. The former requires the reader to understand the nature of the advantage the informed trader enjoys. This can be done through a direct microscopic analysis of agents: in this case high frequency traders. The latter point is more technical in nature. Before any economic conclusion can be reached one has to delve into the exact rules of electronic exchanges. This allows the identification of the exact sources of profits and losses for a trader trading electronically. Both can be tackled on a theoretical level, but also almost physically grasped through the massive amount of data present in modern markets.

Besides the original work by Kyle (\cite{Kyle}), other microstructure papers focus on inventory managing market makers (\cite{Amihud, Garman, Stoll, O'Hara5}). These models work under a different market microstructure than the one we study, and are designed to provide qualitative insights on the price formation. Our focus is more on high frequency traders, their profits and the empirically measurable impact they have on the market. 

We posit that models of high frequency markets should include our clearing equation and adverse selection model. The main advantage of such an approach is that it does not require a model for the microstructure noise: wealth is perfectly matched to data using only market-observable quantities. We do not track the true price, only the observed mid-price. We refer to \cite{Yacine_noise} and the reference therein for the very rich literature on microstructure noise and the corresponding models of high frequency markets. Both the paper by Y.~Ait-Sahalia, P.~A. Mykland and L.~Zhang \cite{Yacine_noise} and the book by Y.~Ait-Sahalia and J.~Jacod \cite{Yacine} have been instrumental in the empirical section of our paper, as they highlight the technicalities surrounding econometric tests in high frequency markets. Unlike these references however, our paper aims to measure and match market quantities without resorting to the notion of microstructure noise.

An alternative way to model wealth in a high frequency world is the Almgren and Chriss approach \cite{Almgren}, which relies on a linear price-impact relationship. This approach to adverse selection is more explicit, though more rigid than ours. It is a purely descriptive, agent-less model. Instead of being driven by data, its main focus is tractability for a specific type of applications: optimal execution. Because of this, calibration is not as straightforward in the Almgren and Chriss model as in ours. However, the linear structure leads to closed-form formulas for a larger class of problems, mostly found in the optimal execution literature (see \cite{Alfonsi, Wang}).

From an empirical point of view, we study a market for a financial asset traded electronically via a limit order book. Market orders are assumed to arrive at high frequency and all trades execute either at the best bid or ask price. Moreover, hidden orders are discarded. The appendix contains more information on our data set from NASDAQ. 

A recent pair of papers using high frequency data is \cite{O'Hara3, O'Hara2}. Both papers by O'Hara et al. estimate adverse selection by measuring the imbalance between buy and sell orders. One of the main technical tools advocated by these authors is the use of an \emph{event-based clock} to model electronic markets, in line with the practice of high frequency trading. We adopt this event-based point of view, but choose instead to measure adverse selection by the relationship between executed limit orders and price changes.

We illustrate the significant impact of the trading frictions we identify with an application to option hedging by replication in the spirit the seminal work of F. Black and M. Scholes. In our framework, European options are priced with the same formula, but the volatility needs to be changed to incorporate the bid-ask spread. Perfect replication is still possible, albeit with a twist: negative gamma options can only be replicated with limit orders while positive gamma options are hedged with market orders. 

These results on option pricing are reminiscent of, but distinct from,  Leland's strategy (see \cite{Leland} for example), in which pricing of European options under transaction costs is explored. The major differences are 1) the sharp distinction between the use of limit and market orders as hedging tools;  2) the impact of adverse selection which is absent in Leland's model. At a procedural level, Leland adds transaction costs to a model already set up in continuous time, while in our approach, we model the market microstructure frictions at the microscopic level, before the extension to a macroscopic, continuous time model. 

The rest of the paper is organized as follows. Section \ref{sec:superior_information} describes our notion of \emph{superior information} to describe high frequency traders. It is contrasted to the usual notion of insider information. A simple model is presented and tested on data. Section \ref{sec:clearing_condition} begins with a technical description of the clearing mechanism on a high frequency market. This leads to novel equations that decompose a trader's wealth into frictionless wealth, transaction costs and adverse selection costs. An empirical analysis shows the importance of the so far neglected terms. Finally, section \ref{sec:applications} presents two applications of our framework. The first measures the profits made by high frequency traders, the effective spread captured by liquidity providers and provides an econometric test for adverse selection. The paper concludes with an application to option hedging in the spirit of the standard Black - Scholes replication argument.

\section{Superior Information of High Frequency Traders}\label{sec:superior_information} 
This section describes high frequency traders as \emph{informed agents}. The main message is that the nature of their information, and the risk they carry, is \emph{radically different} from what has been proposed in the microstructure literature to date. Our canonical reference to contrast the two information structures is the fundamental work of Kyle on insider trading.
  
\subsection{Insider information or superior information?}
An informed trader is an agent who holds more accurate information on an asset than the rest of the market. The most extreme and well-known example is \emph{insider trading}, where a trader knows ahead of time the true future value of an asset. To be precise, let us describe the exact nature of the information an insider has access to, how he trades to take advantage of his information and what market risks this strategy entails.

The insider trader knows the true value of the asset at some future time. This tells him the \emph{fundamental} price of the asset and a \emph{time frame} on which the market value has to converge to this price level. The insider trader has no additional information on the current supply and demand of the asset and hence will not be able to predict short-term price movements. As expertly described in \cite{Kyle}, if the current market price is lower than the true value of the asset, the trader will accumulate a position over a long period of time, regardless of the price movements and make a profit. If the market price is higher than the true value, the insider will sell the asset over an extended period of time. His edge is the knowledge of the fundamental price of the asset. The main risk the insider runs is marking his inventory to the market. The exact flow of supply and demand can erode his profits through multiple channels. First, if he trades too fast, the inside information will be incorporated faster into the price, leading to less profits. Furthermore, short-term price fluctuations due to supply and demand dynamics can make his marked-to-market profits volatile. He may be asked to satisfy margin requirements by the exchange or regulatory collateral requirements. These will make his return on investment less attractive.

Now consider a high frequency trader. Unlike the insider trader, she has \emph{no additional information} on the fundamental value of an asset. She does \emph{not} know the true price of the asset at some future point in time. Instead, a high frequency trader has a much more accurate picture of the \emph{current supply and demand} of the asset. This advantage can manifest itself in a number of ways: she may have access to more market information\footnote{Without being exhaustive, examples of such an information bias are:
\begin{enumerate}
\item The ability to see the full order book beyond the best bid or ask price.
\item Access to other pools of liquidity, such as alternative exchanges or dark pools.
\item More processing and storing power to pour over the massive data feeds. 
\end{enumerate}} or she may simply process the information faster\footnote{The canonical and most extreme example of that is \emph{latency arbitrage}. There have also been cases of alleged front-running.}. But just as in the literature on insider trading, the underlying cause for the information advantage is not as important as the structure of the information itself -and how it is used-. In the case of high frequency trading, the information is such that the trader can \emph{predict with near perfect accuracy} the next \emph{price movement}. She profits from this knowledge by trading in anticipation of these short-term price movements, buying before the price moves up and selling before the price moves down. She therefore makes a profit on each trade marked-to-market, but runs the risk of accumulating an unwanted inventory which may turn against her in the long run.

So, in a nutshell, the insider trader and the high-frequency trader can be seen as mirror images of each other:
\begin{itemize}
\item The insider predicts a price level, which corresponds to some fundamental information on the asset. 
\item The high frequency trader predicts price \emph{fluctuations} due to information on the market's state of supply and demand for the asset.
\item The insider makes money accumulating a position in the long run. All trades go in the same direction.
\item The high frequency trader makes marked-to-market profits trade by trade. Trades oscillate rapidly between buys and sells.
\item The insider's risk is the marked-to-market value of his portfolio.
\item The high frequency trader's risk is to accumulate an inventory. 
\end{itemize}

The current literature does not distinguish between these two types of \emph{informed} traders. We will refer to the first one as traders with \emph{inside} information. The second type of trader will be said to have \emph{superior} information. After describing these traders in words, we propose a simple model distinguishing their behavior.

\subsection{A toy model}
Let $(p_n)_{n=1...N}$ be a sequence of prices and $(L_n)_{n=1...N}$ a trader's inventory \emph{before} the trade at time $n$, which will happen at the market price $p_n$ (up to some bid-ask spread). 

The standard model for a trader with inside information is to say that the trader has access to the information $p_N$ in addition to the current price $p_n$, but not the intermediate prices. The trading volume at time $n$ is $\Delta_{n} L = L_{n+1}- L_n$, the inventory after the trade minus the inventory before the trade. For an insider, it will be of the same sign as $p_N - p_n$ to guarantee profits at time $N$. One can find an example of a more explicit trading strategy for insiders in \cite{Kyle}.

Our proposed equivalent for traders with superior information is to give the trader access to $p_{n+1}$ instead of $p_N$. The trader with superior information therefore knows $p_n$ as well as the forward-looking price increment $\Delta_{n} p = p_{n+1} - p_n$, but not the prices in the further future. Her trade volume $\Delta_n L$ will be of the same sign as $\Delta_{n} p$ thanks to this \emph{look-ahead} option.

For an uninformed trader, $\Delta_n L$ is noise and therefore uncorrelated to both $p_N - p_n$ and $\Delta_n p$. 

One could build a full equilibrium model to get a more explicit relationship between $\Delta_n L$ and $\Delta_n p$. The bare bones model $\Delta_n p \Delta_n L >0$ for traders with superior information, $\E \Delta_n p \Delta_n L = 0$ for noise traders will suffice for this paper. Note that this relationship can also be interpreted as a form of \emph{price impact}, as trading in advance to a price increment cannot be empirically distinguished from the trade moving the price. Indeed, some high frequency trading strategies rely on a thin order book to \emph{cause} short-term price movements instead of predicting them.

\subsection{Empirical analysis}
A straightforward way to quantify the above relationship for a given data set is to compute the proportion of trades such that
\begin{enumerate}
\item $\Delta_n L \Delta_n p >0$; \label{price_impact}
\item $\Delta_n L \Delta_n p =0$; \label{no_price_impact}
\item $\Delta_n L \Delta_n p <0$. \label{inverse_price_impact}
\end{enumerate} 
Case (\ref{price_impact}) means that the trade led to price impact. Case (\ref{no_price_impact}) means that the trade left the midprice unchanged and was caused by 'noise' trading. Case (\ref{inverse_price_impact}) corresponds to reverse price impact and is also due to noise trading. 

Not all trades with price impact come from informed traders: they could also be the product of lucky noise trades. However, both trades without price impact and inverse price impact must come from noise traders. Because for these traders, $\E\Delta_n L \Delta_n p = 0$, the amount of false positives when equating price impact with informed trading is equal to the number of inverse price impact trades: the lower the amount of trades with inverse price impact, the higher the proportion of trades with price impact that are due to traders with superior information rather than noise traders.

\begin{center}
\tiny{
		\begin{longtable}{|l|c|c|c|c|}
			stock symbol & total nb of trades & with price impact & without price impact & reverse price impact \\
				\hline
					 AA       & 5115 & 325 (6.35\%) & 4789 & 1 (0.02\%) \\ 
					 AAPL     & 46709 & 16896 (36.17\%) & 27466 & 2347 (5.03\%) \\ 
					 ADBE     & 9031 & 2270 (25.13\%) & 6613 & 148 (1.64\%) \\ 
					 AGN      & 1884 & 1073 (56.95\%) & 656 & 155 (8.23\%) \\ 
					 AINV     & 1882 & 199 (10.57\%) & 1678 & 5 (0.27\%) \\ 
					 AMAT     & 9383 & 524 (5.58\%) & 8854 & 5 (0.06\%) \\ 
					 AMED     & 697 & 195 (27.97\%) & 478 & 24 (3.45\%) \\ 
					 AMGN     & 15200 & 6494 (42.72\%) & 8161 & 545 (3.59\%) \\ 
					 AMZN     & 11966 & 6098 (50.96\%) & 5219 & 649 (5.43\%) \\ 
					 ANGO     & 482 & 128 (26.55\%) & 337 & 17 (3.53\%) \\ 
					 APOG     & 968 & 411 (42.45\%) & 501 & 56 (5.79\%) \\ 
					 ARCC     & 2242 & 246 (10.97\%) & 1992 & 4 (0.18\%) \\ 
					 AXP      & 14395 & 4070 (28.27\%) & 9982 & 343 (2.39\%) \\ 
					 AYI      & 732 & 438 (59.83\%) & 231 & 63 (8.61\%) \\ 
					 AZZ      & 102 & 63 (61.76\%) & 21 & 18 (17.65\%) \\ 
					 BAS      & 469 & 268 (57.14\%) & 163 & 38 (8.11\%) \\ 
					 BHI      & 9648 & 3474 (36\%) & 5818 & 356 (3.69\%) \\ 
					 BIIB     & 4270 & 2287 (53.55\%) & 1658 & 325 (7.62\%) \\ 
					 BRCM     & 18273 & 3310 (18.11\%) & 14839 & 124 (0.68\%) \\ 
					 BRE      & 1141 & 490 (42.94\%) & 580 & 71 (6.23\%) \\ 
					 BXS      & 672 & 272 (40.47\%) & 377 & 23 (3.43\%) \\ 
					 BZ       & 299 & 110 (36.78\%) & 167 & 22 (7.36\%) \\ 
					 CB       & 1906 & 924 (48.47\%) & 879 & 103 (5.41\%) \\ 
					 CBEY     & 519 & 168 (32.36\%) & 315 & 36 (6.94\%) \\ 
					 CBT      & 645 & 334 (51.78\%) & 242 & 69 (10.7\%) \\ 
					 CBZ      & 121 & 47 (38.84\%) & 66 & 8 (6.62\%) \\ 
					 CDR      & 166 & 58 (34.93\%) & 104 & 4 (2.41\%) \\ 
					 CELG     & 7699 & 3828 (49.72\%) & 3339 & 532 (6.91\%) \\ 
					 CETV     & 248 & 72 (29.03\%) & 166 & 10 (4.04\%) \\ 
					 CKH      & 212 & 123 (58.01\%) & 63 & 26 (12.27\%) \\ 
					 CMCSA    & 17870 & 2410 (13.48\%) & 15424 & 36 (0.21\%) \\ 
					 CNQR     & 976 & 467 (47.84\%) & 443 & 66 (6.77\%) \\ 
					 COO      & 221 & 127 (57.46\%) & 68 & 26 (11.77\%) \\ 
					 COST     & 8229 & 3578 (43.48\%) & 4346 & 305 (3.71\%) \\ 
					 CPSI     & 334 & 167 (50\%) & 136 & 31 (9.29\%) \\ 
					 CPWR     & 1855 & 273 (14.71\%) & 1576 & 6 (0.33\%) \\ 
					 CR       & 590 & 302 (51.18\%) & 238 & 50 (8.48\%) \\ 
					 CRI      & 925 & 487 (52.64\%) & 369 & 69 (7.46\%) \\ 
					 CSCO     & 20155 & 1315 (6.52\%) & 18836 & 4 (0.02\%) \\ 
					 CSE      & 1213 & 169 (13.93\%) & 1041 & 3 (0.25\%) \\ 
					 CSL      & 377 & 189 (50.13\%) & 145 & 43 (11.41\%) \\ 
					 CTRN     & 141 & 62 (43.97\%) & 68 & 11 (7.81\%) \\
					 CTSH     & 7991 & 2958 (37.01\%) & 4821 & 212 (2.66\%) \\ 
					 DCOM     & 312 & 132 (42.3\%) & 148 & 32 (10.26\%) \\ 
					 DELL     & 3741 & 198 (5.29\%) & 3542 & 1 (0.03\%) \\ 
					 DIS      & 12956 & 2924 (22.56\%) & 9895 & 137 (1.06\%) \\ 
					 DK       & 1045 & 549 (52.53\%) & 381 & 115 (11.01\%) \\ 
					 DOW      & 8823 & 1682 (19.06\%) & 7075 & 66 (0.75\%) \\ 
					 EBAY     & 47060 & 11716 (24.89\%) & 34571 & 773 (1.65\%) \\ 
					 ESRX     & 10745 & 2985 (27.78\%) & 7543 & 217 (2.02\%) \\ 
					 EWBC     & 6391 & 1335 (20.88\%) & 4983 & 73 (1.15\%) \\ 
					 FCN      & 248 & 135 (54.43\%) & 98 & 15 (6.05\%) \\ 
					 FFIC     & 200 & 79 (39.5\%) & 104 & 17 (8.51\%) \\ 
					 FL       & 3877 & 1103 (28.44\%) & 2716 & 58 (1.5\%) \\ 
					 FMER     & 3747 & 751 (20.04\%) & 2965 & 31 (0.83\%) \\ 
					 FPO      & 256 & 146 (57.03\%) & 97 & 13 (5.08\%) \\ 
					 FRED     & 751 & 283 (37.68\%) & 435 & 33 (4.4\%) \\ 
					 FULT     & 3202 & 455 (14.2\%) & 2736 & 11 (0.35\%) \\ 
					 GAS      & 643 & 306 (47.58\%) & 277 & 60 (9.34\%) \\ 
					 GE       & 12968 & 916 (7.06\%) & 12048 & 4 (0.04\%) \\ 
					 GILD     & 20404 & 4745 (23.25\%) & 15370 & 289 (1.42\%) \\ 
					 GLW      & 7346 & 403 (5.48\%) & 6942 & 1 (0.02\%) \\ 
					 GOOG     & 8594 & 3919 (45.6\%) & 4066 & 609 (7.09\%) \\ 
					 GPS      & 7377 & 1815 (24.6\%) & 5465 & 97 (1.32\%) \\ 
					 HON      & 5813 & 2170 (37.33\%) & 3460 & 183 (3.15\%) \\ 
					 HPQ      & 14918 & 1823 (12.22\%) & 13076 & 19 (0.13\%) \\ 
					 IMGN     & 2416 & 708 (29.3\%) & 1627 & 81 (3.36\%) \\ 
					 INTC     & 28221 & 1192 (4.22\%) & 27022 & 7 (0.03\%) \\ 
					 IPAR     & 172 & 66 (38.37\%) & 83 & 23 (13.38\%) \\ 
					 ISIL     & 2329 & 382 (16.4\%) & 1938 & 9 (0.39\%) \\ 
					 ISRG     & 2415 & 1189 (49.23\%) & 1026 & 200 (8.29\%) \\ 
					 JKHY     & 1002 & 428 (42.71\%) & 515 & 59 (5.89\%) \\ 
					 KMB      & 2608 & 1414 (54.21\%) & 981 & 213 (8.17\%) \\ 
					 KR       & 6546 & 1253 (19.14\%) & 5263 & 30 (0.46\%) \\ 
					 LANC     & 226 & 100 (44.24\%) & 105 & 21 (9.3\%) \\ 
					 LECO     & 1093 & 526 (48.12\%) & 503 & 64 (5.86\%) \\ 
					 LPNT     & 1876 & 832 (44.34\%) & 931 & 113 (6.03\%) \\ 
					 LSTR     & 1156 & 574 (49.65\%) & 493 & 89 (7.7\%) \\ 
					 MAKO     & 1286 & 339 (26.36\%) & 920 & 27 (2.1\%) \\ 
					 MANT     & 482 & 201 (41.7\%) & 242 & 39 (8.1\%) \\ 
					 MDCO     & 1737 & 753 (43.35\%) & 900 & 84 (4.84\%) \\ 
					 MELI     & 812 & 361 (44.45\%) & 380 & 71 (8.75\%)  \\
				 MIG      & 528 & 71 (13.44\%) & 451 & 6 (1.14\%) \\ 
				 MMM      & 5036 & 2074 (41.18\%) & 2682 & 280 (5.56\%) \\ 
				 MOD      & 110 & 47 (42.72\%) & 56 & 7 (6.37\%) \\ 
				 MOS      & 3754 & 1549 (41.26\%) & 2016 & 189 (5.04\%) \\ 
				 MRTN     & 378 & 141 (37.3\%) & 212 & 25 (6.62\%) \\ 
				 MXWL     & 496 & 175 (35.28\%) & 313 & 8 (1.62\%) \\ 
				 NSR      & 556 & 258 (46.4\%) & 247 & 51 (9.18\%) \\ 
				 NUS      & 1300 & 629 (48.38\%) & 561 & 110 (8.47\%) \\ 
				 NXTM     & 523 & 140 (26.76\%) & 359 & 24 (4.59\%) \\ 
				 PBH      & 123 & 60 (48.78\%) & 43 & 20 (16.27\%) \\ 
				 PFE      & 14626 & 1314 (8.98\%) & 13303 & 9 (0.07\%) \\ 
				 PG       & 18615 & 3768 (20.24\%) & 14658 & 189 (1.02\%) \\ 
				 PNC      & 5277 & 2015 (38.18\%) & 3075 & 187 (3.55\%) \\ 
				 PNY      & 540 & 248 (45.92\%) & 231 & 61 (11.3\%) \\ 
				 PTP      & 513 & 294 (57.3\%) & 172 & 47 (9.17\%) \\ 
				 RIGL     & 2079 & 371 (17.84\%) & 1693 & 15 (0.73\%) \\ 
				 ROC      & 968 & 523 (54.02\%) & 336 & 109 (11.27\%) \\ 
				 ROCK     & 743 & 214 (28.8\%) & 507 & 22 (2.97\%) \\ 
				 SF       & 440 & 242 (55\%) & 148 & 50 (11.37\%) \\ 
				 SFG      & 170 & 98 (57.64\%) & 48 & 24 (14.12\%) \\ 
				 SWN      & 13815 & 3635 (26.31\%) & 9953 & 227 (1.65\%)
		\end{longtable}

{Table 1. Proportion of trades with price impact, without price impact and with reverse price impact on  04/18/13 for 103 stocks.}
}
\end{center}

An important empirical remark is that the number of trades exhibiting inverse price impact is very small. This is probably due to the discrete nature of prices and the bid-ask spread on high frequency markets\footnote{The so called 'rounding' effect.}. This means that nearly all the trades with price impact were due to traders with superior information, and that noise traders typically have no price impact. The empirical reality is therefore even simpler than our model!

\section{The Clearing Condition on a Limit Order Book}\label{sec:clearing_condition}
On a technical level, the aim of this section is to describe accurately the evolution of the wealth of a trader trading on a high frequency order book. This is the equivalent of a \emph{clearing condition} and depends crucially on whether the trader is trading via limit orders or market orders. The former will be qualified as a \emph{passive} trader, while we describe the latter as an \emph{active} trader. Once these technical equations have been derived, we can extract from a trading flow the amount of transaction cost paid as well as the money lost due to adverse selection by traders with superior information. As a useful practical consequence, we can also perfectly track trading profits and avoid the notion of microstructure noise.

\subsection{Setup}
We first define the quantities of interest. Let $p_n$ be the midprice \emph{before} the trade happens at time $n$. The midprice is defined as the midpoint between the bid and the ask and is often abbreviated to 'the mid'. Similarly, let $s_n$ be the bid-ask spread defined as the difference between the best ask and the best bid.  Note that the number $s_n /2$, the distance between the mid and either leg of the bid-ask, is often called 'the spread'. 

We now single out a trader who's profits and losses we wish to follow. Denote by $L_n$ his inventory, that is, his net position in the traded asset. Finally, define by $K_n$ the amount of cash the trader is holding and assume that his position is self-financing in the sense that the changes in the inventory and cash account can only come from trading on the high frequency order book.

The quantities $p_n$, $s_n$, $L_n$ and $K_n$ are necessary and sufficient to, from an accountant's perspective, summarize the trader's strategy given our set of hypotheses. They are therefore the primary data of our model. It is common however to summarize the trader's profits and losses by a single number called wealth. 

The most basic way to define wealth from the primary data is to say it is the \emph{marked to the mid value of the inventory plus the cash holdings}. If we denote by $X_n$ the trader's wealth before trade $n$, this leads to the equation:
\begin{equation}
X_n = L_n p_n + K_n 
\end{equation}
This is the main quantity of interest we would like to track, both in the theoretical model, and through our empirical analyzes of trading data.

\subsection{Clearing a trade}
We assume the position of the trader to be self-financing. This imposes relationships between $p_n$, $s_n$, $L_n$ and $K_n$ based on the clearing rules that underpin trades on a limit order book. There are five cases to distinguish, depending on whether the trader
\begin{enumerate}
\item triggers a buy with a market order; \label{buy_MO}
\item triggers a sell with a market order; \label{sell_MO}
\item has his buy limit order executed; \label{buy_LO}
\item has his sell limit order executed; \label{sell_LO}
\item is not part of the current trade. \label{no_trade}
\end{enumerate}
In case \ref{buy_MO}, the trader buys at the ask and therefore
\begin{equation}
K_{n+1} - K_n = - \left(p_n +s_n/2\right)\left(L_{n+1} - L_n\right)
\end{equation}
with $L_{n+1} - L_n >0$.

In case \ref{sell_MO}, the trader sells at the bid and therefore
\begin{equation}
K_{n+1} - K_n = - \left(p_n  -s_n/2\right)\left(L_{n+1} - L_n\right)
\end{equation}
with $L_{n+1} - L_n <0$.

In case \ref{buy_LO}, the trader buys at the bid and therefore
\begin{equation}
K_{n+1} - K_n = - \left(p_n -s_n/2\right)\left(L_{n+1} - L_n\right)
\end{equation}
with $L_{n+1} - L_n >0$.

In case \ref{sell_LO}, the trader sells at the ask and therefore
\begin{equation}
K_{n+1} - K_n = - \left(p_n +s_n/2\right)\left(L_{n+1} - L_n\right)
\end{equation}
with $L_{n+1} - L_n<0$.

Finally, in case \ref{no_trade}, $K_{n+1}= K_{n}$ and $L_{n+1} = L_n$.

\vskip 6pt\noindent
These five cases can be summarized by the equation:
\begin{equation}
\Delta_n K = - p_n \Delta_n L \pm \frac{s_n}{2} \left|\Delta_n L\right| \label{self_financing_K}
\end{equation}
where $\Delta_n K$ stands for $K_{n+1} - K_{n}$ and ''$\pm$'' is defined as ''$+$'' when trading with limit orders and ''$-$'' when trading with market orders. This is our clearing condition. It differs from that used in the Black and Scholes model, where $\Delta_n K = - p_{n+1} \Delta_n L$, unless the next mid price is equal to the previous bid or ask price.

This can be related to the wealth of a self-financing trading strategy by plugging the equation into the definition $X_n = L_n p_n + K_n$. This leads to the equation
\begin{equation}
\Delta_n X = L_n \Delta_n p \pm \frac{s_n}{2} \left|\Delta_n L\right| + \Delta_n L \Delta_n p \label{self_financing_discrete}
\end{equation}
where $\pm$ has the meaning defined above. These are the trading profits of a trader.

\subsection{Decomposing profits and losses}

Equation \eqref{self_financing_discrete} is a simple accounting rule for updating the wealth of a portfolio when trading on a high frequency order book. 
It could be stated in several different equivalent forms. We chose to write it in this way to highlight the last two terms. These differ from the commonly accepted rule in the frictionless case. We comment on the three terms separately:
\subsubsection{Frictionless wealth}
The standard self-financing equation used in a frictionless market model is
\begin{equation}
\Delta_n X = L_n \Delta_n p
\end{equation}
which coincides with our equation \eqref{self_financing_discrete} when the trader holds his position without trading. The rational for calling this the \emph{frictionless} wealth equation is the notion that whether the trader is holding the position or re-balancing it, should not affect his wealth. Whether this holds true in practice depends upon how close the values of the two other terms of equation \eqref{self_financing_discrete} are from canceling each other.

\subsubsection{Transaction costs}
The term
\begin{equation}
\pm \frac{s_n}{2} \left|\Delta_n L\right|
\end{equation}
has a clear interpretation as the spread captured or paid during the transaction. If one trades with limit orders, the spread is captured for each unit of traded asset while for market orders, the spread is paid. This term, all others being equal, favors trading with limit orders and constitutes a first friction on high frequency markets. Note that the exact form of this transaction cost term hinges on the assumption that all trades are made at the best bid or ask price. In the general case, transaction costs are given by a convex function of the trade volume. We leave the theoretical discussion of this fact to the appendix.

\subsubsection{Price Impact from adverse selection}
The third term is non-standard, and to the best of our knowledge, such a price impact term has never been identified and used in this direct form as a contribution to wealth change:
\begin{equation}
\Delta_n p\Delta_n L.
\end{equation}
We now propose an interpretation as a form of adverse selection. This links our clearing condition to the previous discussion on traders  with superior information. First, note that while the passive trader 'controls' the transaction cost term (by choosing how far to placer her limit orders from the mid price), the active trader controls this term through the trade volume $\Delta_n L$. If the active trader has no superior information, then $\Delta_n L$ will be uncorrelated to  $\Delta_n p$ and this term averages out to zero. If the active trader has superior information, then $\Delta_n p \Delta_n L >0$ from the active trader's perspective, and $\Delta_n p \Delta_n L <0$ from the passive trader's perspective. The trade will empirically exhibit price impact and the passive trader will lose money to the active trader.

\subsection{Empirical analysis}
We now compare empirically three clearing equations with the empirical exchange of wealth observed on the market.

The standard way to model wealth in the academic finance literature is to use the \emph{frictionless clearing equation}:
\begin{equation}
\Delta_n X = L_n \Delta_n p, \label{wealth_frictionless} 
\end{equation}
where $p$ is not the midprice, while the 'fair' price of the asset, which in an efficient market, is assumed to be a martingale. This equation does not match the data when we use the midprice and the concept of 'microstructure noise' can be invoked to explain this significant gap. In a frictionless market, microstructure does not matter and profits come solely from longer term views on the market. 

Transaction costs can be added, leading to the \emph{clearing equation with transaction costs}:
\begin{equation}
\Delta_n X = L_n \Delta_n p \pm \frac{s_n}{2} \left|\Delta_n L\right|\label{wealth_transaction_costs} 
\end{equation}
While this equation takes into account the spread, it ignores price impact. Is is however a good model for agents who are using market orders and who do not adversely select the market. This can be the case for low frequency traders who do not optimize their execution and only use long-term views to trade.

Finally, incorporating the price impact term leads to a complete picture of wealth on a high frequency market:
\begin{equation}
\Delta_n X = L_n \Delta_n p \pm \frac{s_n}{2} \left|\Delta_n L\right| + \Delta_n L \Delta_n p. \label{wealth_complete}
\end{equation}
This description forgoes the notion of microstructure noise, as wealth can be perfectly tracked using directly measurable market quantities.

\begin{figure}[htbp]
	\centering
		\includegraphics[width=0.9\textwidth]{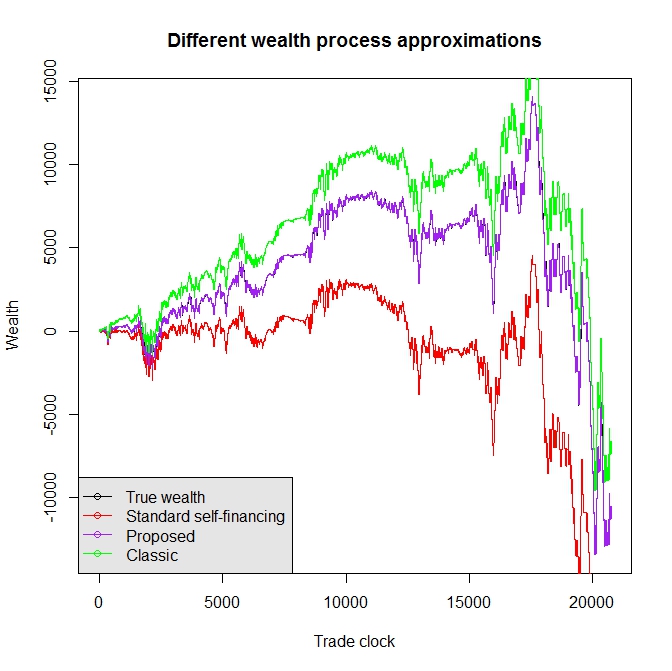}
	\caption{Example: Coca Cola on 04/18/13. Plots of the actual wealth of the aggregate passive trader together with the wealth computed from the three self-financing conditions. Red is the frictionless case \eqref{wealth_frictionless}. Green corresponds to  \eqref{wealth_transaction_costs}. 
	The actual wealth and the wealth computed from our self-financing condition (\ref{wealth_complete}) are indistinguishable on the graph. }
	\label{fig:wealth_approx}
\end{figure}

We first compare the impact of transaction costs and price impact by computing wealth with each of the three equations: the frictionless case given in \eqref{wealth_frictionless}, the case with transaction costs given in \eqref{wealth_transaction_costs} and the exact wealth equation \eqref{wealth_complete}. For the stocks of our universe, we provide the results in Table 3. 

\begin{center}
\tiny{

		\begin{longtable}{|l|c|c|c|}
			stock symbol & relative error & friction ratio & net P\&L  \\
				\hline
					 AA       & 177\% & 8\% & 5970 \$ \\ 
					 AAPL     & 9\% & 19\% & -1779103 \$ \\ 
					 ADBE     & 49\% & 44\% & -5746 \$ \\ 
					 AGN      & 564\% & 78\% & -28 \$ \\ 
					 AINV     & 140\% & 16\% & 1062 \$ \\ 
					 AMAT     & 80\% & 7\% & -15806 \$ \\ 
					 AMED     & 173\% & 40\% & 181 \$ \\ 
					 AMGN     & 15\% & 46\% & -55395 \$ \\ 
					 AMZN     & 63\% & 30\% & -57491 \$ \\ 
					 ANGO     & 149\% & 44\% & 127 \$ \\ 
					 APOG     & 93\% & 45\% & 947 \$ \\ 
					 ARCC     & 54\% & 21\% & 2690 \$ \\ 
					 AXP      & 19\% & 41\% & -37162 \$ \\ 
					 AYI      & 266\% & 64\% & 495 \$ \\ 
					 AZZ      & 162\% & 50\% & 57 \$ \\ 
					 BAS      & 1\% & 96\% & -1594 \$ \\ 
					 BHI      & 6\% & 61\% & 44602 \$ \\ 
					 BIIB     & 51\% & 40\% & -19266 \$ \\ 
					 BRCM     & 41\% & 29\% & 23826 \$ \\ 
					 BRE      & 8\% & 58\% & -3772 \$ \\ 
					 BXS      & 108\% & 69\% & -133 \$ \\ 
					 BZ       & 18\% & 85\% & -208 \$ \\ 
					 CB       & 58\% & 58\% & -1417 \$ \\ 
					 CBEY     & 81\% & 41\% & 399 \$ \\ 
					 CBT      & 25\% & 74\% & 769 \$ \\ 
					 CBZ      & 1235\% & 67\% & -21 \$ \\ 
					 CDR      & 13\% & 62\% & -260 \$ \\ 
					 CELG     & 55\% & 48\% & -12403 \$ \\ 
					 CETV     & 122\% & 40\% & 108 \$ \\ 
					 CKH      & 24\% & 58\% & -1248 \$ \\ 
					 CMCSA    & 46\% & 22\% & 25599 \$ \\ 
					 CNQR     & 13\% & 57\% & 5837 \$ \\ 
					 COO      & 33\% & 53\% & 1332 \$ \\ 
					 COST     & 7\% & 51\% & -32206 \$ \\ 
					 CPSI     & 44\% & 76\% & 540 \$ \\ 
					 CPWR     & 256\% & 23\% & 403 \$ \\ 
					 CR       & 32\% & 70\% & -1619 \$ \\ 
					 CRI      & 101\% & 76\% & -302 \$ \\ 
					 CSCO     & 193\% & 9\% & 15086 \$ \\ 
					 CSE      & 294\% & 27\% & 220 \$ \\ 
					 CSL      & 128\% & 60\% & -337 \$ \\ 
					 CTRN     & 106\% & 38\% & 162 \$ \\
					 CTSH     & 18\% & 51\% & 16447 \$ \\ 
					 DCOM     & 54\% & 71\% & 206 \$ \\ 
					 DELL     & 133\% & 7\% & -7695 \$ \\ 
					 DIS      & 25\% & 38\% & 21363 \$ \\ 
					 DK       & 28\% & 65\% & 2580 \$ \\ 
					 DOW      & 394\% & 35\% & 921 \$ \\ 
					 EBAY     & 406\% & 34\% & 6854 \$ \\ 
					 ESRX     & 10\% & 47\% & 34609 \$ \\ 
					 EWBC     & 9\% & 36\% & 31326 \$ \\ 
					 FCN      & 122\% & 65\% & 101 \$ \\ 
					 FFIC     & 48\% & 25\% & 418 \$ \\ 
					 FL       & 11\% & 50\% & -9114 \$ \\ 
					 FMER     & 35\% & 41\% & -4522 \$ \\ 
					 FPO      & 56\% & 114\% & -85 \$ \\ 
					 FRED     & 310\% & 51\% & 85 \$ \\ 
					 FULT     & 41\% & 25\% & -4251 \$ \\ 
					 GAS      & 65\% & 62\% & 325 \$ \\ 
					 GE       & 382\% & 11\% & -4446 \$ \\ 
					 GILD     & 14\% & 39\% & -57605 \$ \\ 
					 GLW      & 92\% & 6\% & -12881 \$ \\ 
					 GOOG     & 22\% & 32\% & -211247 \$ \\ 
					 GPS      & 61\% & 48\% & 3788 \$ \\ 
					 HON      & 10\% & 70\% & 13206 \$ \\ 
					 HPQ      & 73\% & 19\% & -18421 \$ \\ 
					 IMGN     & 35\% & 32\% & -5303 \$ \\ 
					 INTC     & 61\% & 5\% & 111422 \$ \\ 
					 IPAR     & 57\% & 22\% & -316 \$ \\ 
					 ISIL     & 14680\% & 28\% & -8 \$ \\ 
					 ISRG     & 320\% & 45\% & -6180 \$ \\ 
					 JKHY     & 19\% & 56\% & -2189 \$ \\ 
					 KMB      & 0\% & 74\% & 26968 \$ \\ 
					 KR       & 310\% & 36\% & -839 \$ \\ 
					 LANC     & 110\% & 33\% & 466 \$ \\ 
					 LECO     & 213\% & 60\% & 273 \$ \\ 
					 LPNT     & 56\% & 47\% & 2514 \$ \\ 
					 LSTR     & 357\% & 57\% & 272 \$ \\ 
					 MAKO     & 42\% & 41\% & 2031 \$ \\ 
					 MANT     & 53\% & 60\% & 507 \$ \\ 
					 MDCO     & 8\% & 56\% & 8883 \$ \\ 
					 MELI     & 17\% & 36\% & 14901 \$ \\
					 MIG      & 62\% & 26\% & 354 \$ \\ 
					 MMM      & 66\% & 47\% & -4133 \$ \\ 
					 MOD      & 61\% & 75\% & -41 \$ \\ 
					 MOS      & 13\% & 60\% & -10969 \$ \\ 
					 MRTN     & 45\% & 42\% & -845 \$ \\ 
					 MXWL     & 191276\% & 63\% & 0 \$ \\ 
					 NSR      & 33\% & 67\% & 975 \$ \\ 
					 NUS      & 10\% & 42\% & 22755 \$ \\ 
					 NXTM     & 275\% & 40\% & 76 \$ \\ 
					 PBH      & 27\% & 43\% & 436 \$ \\ 
					 PFE      & 51\% & 15\% & 26718 \$ \\ 
					 PG       & 2356\% & 34\% & 337 \$ \\ 
					 PNC      & 25\% & 65\% & -6350 \$ \\ 
					 PNY      & 16\% & 58\% & 1485 \$ \\ 
					 PTP      & 47\% & 69\% & 803 \$ \\ 
					 RIGL     & 54\% & 29\% & -2456 \$ \\ 
					 ROC      & 27\% & 58\% & 3124 \$ \\ 
					 ROCK     & 162\% & 38\% & 290 \$ \\ 
					 SF       & 8\% & 84\% & -1027 \$ \\ 
					 SFG      & 28\% & 57\% & 572 \$ \\ 
					 SWN      & 37\% & 45\% & -12890 \$ 
	\end{longtable}
{Table 2. Relative mispricing of wealth using the fricitonless wealth equation. Ratio between adverse selection and transaction cost friction terms. End of day gains of the aggregate passive trades. The table was computed for the date 04/18/13.}
	
}
\end{center}

For the sake of illustration, we also give a plot in figure \ref{fig:wealth_approx} for our example stock, Coca Cola on 04/18/13. The plot in particular shows how current models of wealth cannot even track the realized wealth \emph{ex-post}! While microsctructure noise models such as \cite{Yacine_noise} can be used to measure the error introduced by ignoring the high frequency microstructure, an exact reconstruction of wealth that matches the data perfectly is possible and not more difficult.

As expected, the frictionless equation underestimates true wealth while the equation ignoring price impact over-estimates the wealth of our aggregate liquidity provider. The relative error is always significant. This comes as no surprise as microstructure noise adds up very quickly over a trading day. 

\section{Applications}\label{sec:applications}
Our approach stems from a precise analysis of three ingredients: the information structure of the agents present in high frequency markets, the rules underpinning trades on the electronic exchanges and the vast amount of data present in these markets. As a result, it is particularly apt at answering practical questions both from a regulatory and a trading perspective and grounding the research in the modern world.

\subsection{Measuring HFT profits}
Monitoring the profits made by high frequency traders from their superior information is a first step towards their regulation. Some HFT traders also employ passive strategies that are not covered by our model.

First, let us compute the cumulative losses due to adverse selection, the third term of our wealth equation:
\begin{equation}
\sum_{n=1}^N \Delta_n p\Delta_n L
\end{equation}
We give the plot of this quantity as a function of $N$ in figure \ref{fig:cross-section} for a selection of stocks. Each path corresponds to the cumulative losses of the whole passive side of the market to traders with superior information for a given stock on a single day, rescaled. Note that the path is perfectly decreasing, in line with our model for superior information. In a standard informed trader model such as \cite{Kyle}, these paths would be random walks.

\begin{figure}
	\centering
		\includegraphics[width=1.00\textwidth]{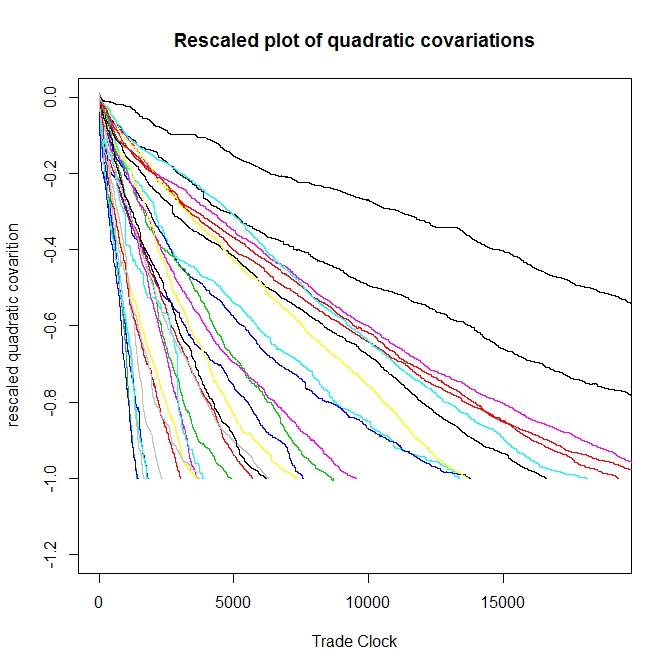}
	\caption{Empirical cumulative losses due to adverse selection (rescaled) for selected stocks.}
	\label{fig:cross-section}
\end{figure}

To get a sense of whether passive traders are -on average- sufficiently compensated for these losses, one can compare the transaction costs earned to the money lost to adverse selection.

\begin{figure}[htbp]
	\centering
		\includegraphics[width=1.00\textwidth]{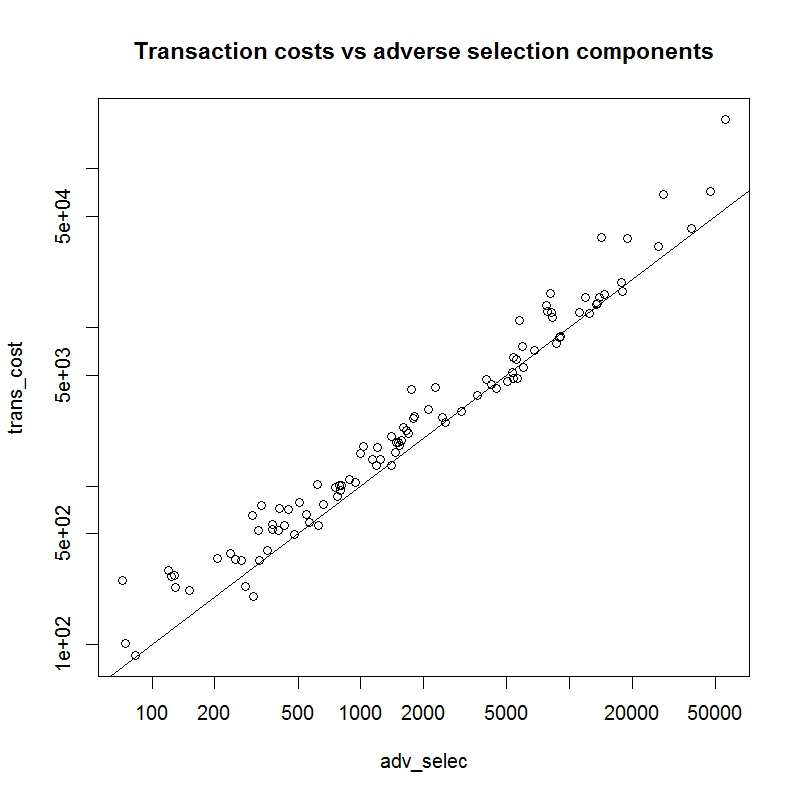}
		\caption{Transaction cost against adverse selection components for multiple stocks. Line corresponds to $y=x$. Plot is in log-log scale and each point corresponds to one stock on a given day.}
	\label{fig:component_comp}
\end{figure}

As the graph shows, for nearly all the stocks, transaction cost gains are larger than losses to adverse selection. This means that there is a sufficient number of noise traders to support the activity of the passive traders. However, a linear regression shows that the effective spread captured by these passive traders amounts to $20\%$ of the posted spread. The other $80\%$ go to the traders with superior information. This is a form of information rent.

\subsection{An econometric test for adverse selection}

We present an econometric test for adverse selection of the kind described above. We assume that both the midprice and the aggregate passive trader's inventory are It\^o processes, say of the form:
\begin{equation}
\begin{cases}
dp_t = \mu_t dt + \sigma_t dW_t \\
dL_t = b_t dt + l_t dW'_t
\end{cases}
\end{equation}
with instantaneous correlation $\rho_t$ between $W_t$ and $W'_t$. Then a continuous version of the adverse selection model implies that 
\begin{equation}
\rho_t \le 0,
\end{equation}
that is, the changes of a passive trader's inventory are always negatively correlated to the price returns. We test the following \emph{null hypothesis}:

{$\bf{H_0}$: There exists $t \in [0,1]$ such that $\rho_t >0$.}

If we denote by $p^N$ and $L^N$ the discrete measurements of the diffusion processes $p$ and $L$ on the uniform grid $\{1/N, 2/N, ..., 1\}$ then \cite{Yacine} suggests to consider:
\begin{equation}
	\begin{cases}
		C_t &= \sum_{n=1}^{\lfloor Nt\rfloor -1} \Delta_n p^N \Delta_n L^N \\
		V^N_t &= N \sum_{n=1}^{\lfloor Nt\rfloor -2} \left(  \left(\Delta_n p^N \Delta_{n+1} L^N\right)^2 + \Delta_n p^N \Delta_n L^N \Delta_{n+1} p^N \Delta_{n+1} L^N \right),
	\end{cases}
\end{equation}
which leads to the following form of the central limit theorem
\begin{equation}
\mathcal{L}\left( \frac{C^N_t - [p,L]_t}{\sqrt{N^{-1} |V^N_t|}} \right) \rightarrow N(0,1).
\end{equation}
The result can then be used to compute rejection probabilities of the null hypothesis on buckets of $M$ trades assuming that $\rho_t$, $\sigma_t$ and $l_t$ are constant on these buckets. We then multiply those rejection probabilities to obtain the overall rejection probability for the null hypothesis. The results are given in Table 2 in the case where $M$ is chosen for each stock so that the trading day is divided into 8 buckets.

\begin{center}
\tiny{

		\begin{longtable}{|l c|l c|l c|}
			stock symbol & prob rejection & stock symbol & prob rejection & stock symbol & prob rejection \\
				\hline
					 AA       & 0.45458 &  CPSI     & 0.99729 &  IPAR     & 0.86019 \\ 
					 AAPL     & 1 &  CPWR     & 0.99433 &  ISIL     & 0.90428 \\ 
					 ADBE     & 1 &  CR       & 0.99835 &  ISRG     & 0.99999 \\ 
					 AGN      & 0.99564 &  CRI      & 0.99438 &  JKHY     & 0.97535 \\ 
					 AINV     & 0.66919 &  CSCO     & 0.99975 &  KMB      & 0.99999 \\ 
					 AMAT     & 0.88248 &  CSE      & 0.93149 &  KR       & 0.99999 \\ 
					 AMED     & 0.92732 &  CSL      & 0.99042 &  LANC     & 0.99475 \\ 
					 AMGN     & 0.99999 &  CTRN     & 0.99998 &  LECO     & 0.99997 \\ 
					 AMZN     & 0.99999 &  CTSH     & 0.99999 &  LPNT     & 0.99874 \\ 
					 ANGO     & 0.99496 &  DCOM     & 0.98683 &  LSTR     & 0.99994 \\ 
					 APOG     & 0.99973 &  DELL     & 0.67247 &  MAKO     & 0.75563 \\ 
					 ARCC     & 0.93312 &  DIS      & 0.99999 &  MANT     & 0.99912 \\ 
					 AXP      & 0.99999 &  DK       & 0.99988 &  MDCO     & 0.99999 \\ 
					 AYI      & 0.99053 &  DOW      & 0.99999 &  MELI     & 0.84372 \\ 
					 AZZ      & 0.99754 &  EBAY     & 1 &  MMM      & 0.98094 \\ 
					 BAS      & 0.99952 &  ESRX     & 1 &  MOD      & 0.99999 \\ 
					 BHI      & 1 &  EWBC     & 0.99999 &  MOS      & 0.99999 \\ 
					 BIIB     & 0.99999 &  FCN      & 0.99974 &  MRTN     & 0.99941 \\ 
					 BRCM     & 1 &  FFIC     & 0.99922 &  MXWL     & 0.99581 \\ 
					 BRE      & 0.9981 &  FL       & 0.99999 &  NSR      & 0.81248 \\ 
					 BXS      & 0.9992 &  FMER     & 0.99911 &  NUS      & 0.99385 \\ 
					 BZ       & 0.9997 &  FPO      & 0.99979 &  NXTM     & 0.9954 \\ 
					 CB       & 0.99999 &  FRED     & 0.99916 &  PBH      & 0.98192 \\ 
					 CBEY     & 0.66095 &  FULT     & 0.91296 &  PFE      & 0.99999 \\ 
					 CBT      & 0.98696 &  GAS      & 0.89222 &  PG       & 1 \\ 
					 CBZ      & 0.99992 &  GE       & 0.99989 &  PNC      & 0.99999 \\ 
					 CDR      & 0.99999 &  GILD     & 0.99999 &  PNY      & 0.98274 \\ 
					 CELG     & 0.99999 &  GLW      & 0.56781 &  PTP      & 0.99217 \\ 
					 CETV     & 0.99917 &  GOOG     & 0.99999 &  RIGL     & 0.98968 \\ 
					 CKH      & 0.99832 &  GPS      & 0.99999 &  ROC      & 0.99661 \\ 
					 CMCSA    & 0.99999 &  HON      & 0.99999 &  ROCK     & 0.87984 \\ 
					 CNQR     & 0.99822 &  HPQ      & 0.99999 &  SF       & 0.99951 \\ 
					 COO      & 0.97052 &  IMGN     & 0.99232 &  SFG      & 0.99929 \\ 
					 COST     & 1 &  INTC     & 0.99113 &  SWN      & 1 \\ 

					\end{longtable}
	{Table 3. Probability of rejection for the null hypothesis on 04/18/13.}
	
}
\end{center}

Table 3 rejects the null hypothesis for nearly all stocks. As a conclusion, instantaneous adverse selection is very easily measured and modeled on the continuous time scale. We only tested for the \emph{sign} of the adverse selection coefficient $\rho$, but more practical-minded researchers may be interested in its magnitude. The correlation is around $-0.3$ across most stocks. This implies that only $30\%$ of the inventory of passive traders is due to adverse selection, but $80\%$ of their profits go to traders with superior information.

\subsection{Implications for option pricing}
The main result of this rather long and technical subsection is a result stating which options can be hedged via limit orders and which can only be hedged via market orders in a Black and Scholes world.

For tractability reasons, it is often convenient to work in the framework of continuous time finance. Some quantities, notably the midprice, can then be modeled as having a Brownian motion component. Using a diffusion approximation argument, we derive the analogue of our wealth equation in continuous time, and revisit the classic Black - Scholes option hedging argument with this corrected equation.

\subsubsection{The limiting equation}

Recall the discrete wealth equation:
\begin{equation}
\Delta_n X = L_n \Delta_n p \pm \frac{s_n}{2}\left|\Delta_n L\right| + \Delta_n p \Delta_n L. \label{disc_self_financing}
\end{equation}

We start with the most standard hypothesis: assuming that we observe a continuous, diffusive price at discrete times. This leads to a continuous time model of the form:
\begin{equation}
dp_t = \mu_t dt + \sigma_t dW_t.
\end{equation}

We will also model the inventory with a continuous diffusion process:
\begin{equation}
dL_t = b_t dt + l_t dW'_t
\end{equation}
with instantaneous correlation $\rho_t$ between $W_t$ and $W'_t$. Our rational for allowing inventories with a correlated Brownian motion is the relative importance of the adverse selection. Without this correlation between changes in inventories and prices, our model for the price would not provide any of the significant instantaneous adverse selection found in the data.

\begin{theorem}
The correct continuous-time clearing equation that incorporates both transaction costs and price impact is:
\begin{equation}
dX_t = L_t dp_t + \left(\rho_t \sigma_t \pm \frac{s_t}{\sqrt{2\pi}}\right)l_t dt \label{cont_self_financing}
\end{equation}
where $\pm$ is the sign of $\rho_t$.
\end{theorem}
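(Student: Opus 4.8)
The plan is to derive \eqref{cont_self_financing} as a diffusion limit of the discrete accounting identity \eqref{disc_self_financing}. I would sum \eqref{disc_self_financing} over a partition $0=t_0<\dots<t_{\lfloor Nt\rfloor}=t$ with mesh $\Delta t=1/N$, substitute the It\^o dynamics $\Delta_n p=\mu_{t_n}\Delta t+\sigma_{t_n}\Delta_n W$ and $\Delta_n L=b_{t_n}\Delta t+l_{t_n}\Delta_n W'$, and then pass to the limit in each of the three summands separately, since they are of structurally different natures (an It\^o integral, a quadratic covariation, and an averaged total-variation term). Recombining the three limits should reproduce $L_t\,dp_t+\bigl(\rho_t\sigma_t\pm\tfrac{s_t}{\sqrt{2\pi}}\bigr)l_t\,dt$.

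The frictionless and adverse-selection terms are the routine ones. The sum $\sum_n L_{t_n}\Delta_n p$ is a left-point Riemann sum for $\int_0^t L_s\,dp_s$ and converges to it by the standard construction of the stochastic integral, contributing $L_t\,dp_t$. The sum $\sum_n\Delta_n p\,\Delta_n L$ is by definition the realized covariation of $p$ and $L$; after substitution only the Brownian cross term $\sigma_{t_n}l_{t_n}\Delta_n W\,\Delta_n W'$ survives (all terms carrying a drift factor are $O(\Delta t^{3/2})$), and its conditional mean is $\rho_{t_n}\sigma_{t_n}l_{t_n}\Delta t$, so the term converges to $[p,L]_t$ and contributes $\rho_t\sigma_t l_t\,dt$.

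The transaction-cost term is the crux. Since $\Delta_n L=l_{t_n}\Delta_n W'+O(\Delta t)$, we have $|\Delta_n L|\approx l_{t_n}|\Delta_n W'|$ with $\Delta_n W'\sim N(0,\Delta t)$, and the whole constant in \eqref{cont_self_financing} comes from the Gaussian absolute first moment $\E|\Delta_n W'|=\sqrt{2/\pi}\,\sqrt{\Delta t}$ together with the identity $\tfrac12\sqrt{2/\pi}=1/\sqrt{2\pi}$. Replacing each $|\Delta_n W'|$ by its conditional expectation turns $\pm\tfrac{s_{t_n}}{2}|\Delta_n L|$ into $\pm\tfrac{s_{t_n}}{\sqrt{2\pi}}l_{t_n}\sqrt{\Delta t}$, which sums to the drift $\pm\tfrac{s_t}{\sqrt{2\pi}}l_t\,dt$. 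The sign $\pm$ is inherited from the discrete limit/market-order convention of \eqref{self_financing_discrete}; since that convention is tied to whether the trader supplies or consumes liquidity, and hence to whether she is adversely selected, the last bookkeeping step is to verify that it matches $\operatorname{sign}(\rho_t)$ as asserted.

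The main obstacle is making this replacement legitimate, which forces a specific scaling. Per step $|\Delta_n L|=O(\sqrt{\Delta t})$, so naively $\sum_n\tfrac{s_{t_n}}{2}|\Delta_n L|=O(\sqrt{N})$ diverges; for the term to survive as a finite $dt$-contribution one must work in the diffusive regime where the spread is comparable to the per-step price move, i.e. $s_{t_n}=O(\sqrt{\Delta t})$. Under that scaling I would control the fluctuations by noting that $\sum_n\tfrac{s_{t_n}}{2}l_{t_n}\bigl(|\Delta_n W'|-\E|\Delta_n W'|\bigr)$ is a martingale whose increments have variance of order $s_{t_n}^2\,\Delta t=O(\Delta t^2)$; summing the $O(1/\Delta t)$ terms gives total variance $O(\Delta t)\to 0$, so the centered sum vanishes in $L^2$ and the replacement of $|\Delta_n W'|$ by its mean is justified. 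Establishing this scaling together with the $L^2$ concentration estimate, and checking that the drift contributions to both $|\Delta_n L|$ and $\Delta_n p\,\Delta_n L$ are genuinely negligible, is the technical heart of the argument; the remainder is bookkeeping.
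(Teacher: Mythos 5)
Your proposal is correct and follows the same overall strategy as the paper: discretize, keep the frictionless term and the covariation term (which converge routinely to $\int L\,dp$ and $[p,L]_t=\int\rho_t\sigma_t l_t\,dt$), impose the scaling $s^N_n=s_{n/N}/\sqrt{N}$ so that the transaction-cost sum neither explodes nor vanishes, and extract the constant $1/\sqrt{2\pi}$ from $\E|N(0,1)|=\sqrt{2/\pi}$. The one genuine difference is how the convergence of $\sum_n\tfrac{s^N_n}{2\sqrt N}|\Delta_nL^N|$ is justified: the paper rewrites it as $\tfrac1N\sum_n F_{n/N}(\sqrt N\,\Delta_nL^N)$ with $F_t(y)=\tfrac{s_t}{2}|y|$ and invokes the Jacod--Protter functional law of large numbers for discretized It\^o processes (Theorem~\ref{thm_Jacod}), after localizing so that $s$ is bounded; you instead prove exactly the special case needed, by replacing $|\Delta_nW'|$ with its conditional mean and killing the centered remainder with an $L^2$ martingale estimate whose increment variances are $O(\Delta t^2)$. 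Your route is more elementary and self-contained (and your variance bookkeeping is right), at the cost of having to check by hand the residual errors you flag --- the drift parts of $\Delta_nL$ and the freezing of $l$ and $s$ at left endpoints --- all of which the cited theorem absorbs, since $|y|$ satisfies its growth condition. On the sign: the paper does not ``verify'' that $\pm$ matches $\mathrm{sign}(\rho_t)$ as a separate step; it carries the discrete constraint $\Delta_np\,\Delta_nL\le0$ (respectively $\ge0$) through the limit to $d[L,p]_t\le0$ for the liquidity provider and $\ge0$ for the taker, with the sign of the spread term fixed by the order type as in \eqref{self_financing_discrete}, so your deferred bookkeeping is consistent with what the paper actually does.
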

In the limit case $\rho_t=0$, the trade was most likely done with a market order and hence $\pm$ is negative. The proof can be found in the appendix.

Note that if we allow $l_t$ to be signed and fix $\rho_t>0$, then the formula collapses to:
\begin{equation}
dX_t = L_t dp_t + \left(\rho_t \sigma_t - \frac{s_t}{\sqrt{2\pi}}\right)l_t dt \label{eq:continuous_signed}
\end{equation}
which is convenient when juggling between limit and market orders.

\subsubsection{Revisiting Black - Scholes}
In this section we assume that any inventory of the form
\begin{equation}
dL_t = b_t dt + l_t dW_t
\end{equation}
with $l_t$ \emph{signed}, is attainable. If $l_t\ge 0$, trading is done with market orders while for $l_t<0$ trades are done with limit orders.

Assume that the midprice follows a geometric Brownian motion as in the standard Black - Scholes model:
\begin{equation}
dp_t = \mu p_t dt + \sigma p_t dW_t,
\end{equation}
and that the spread is proportional to price volatility,
\begin{equation}
s_t = s \sigma p_t.
\end{equation}
for some constant $s$

If the interest rate $r$ is constant and there are no dividends, then this market is complete and we can price a European option using the standard no-arbitrage argument, albeit with a slightly different partial differential equation (PDE) because of the two sources of frictions. The pricing PDE for the value $v$ of a European option is
\begin{equation}
\frac{\partial v}{\partial t}(t,p) + \frac{\sigma^2 p^2}{2}\left(\sqrt{\frac{2}{\pi}}s - 1\right)\frac{\partial^2 v}{\partial p^2}(t,p) + r p \frac{\partial v}{\partial p}(t,p) = r v(t,p) \label{pricing_PDE}
\end{equation} 
which is the same equation as a local volatility model with a factor of
\begin{equation}
\sqrt{\sqrt{\frac{2}{\pi}}s - 1}
\end{equation}
on the volatility. The spread therefore acts as a multiplier on the implied volatility of an option. The larger the spread, the larger the implied volatility market makers will quote.  The full derivation is presented in the appendix.

A consequence of the proof is that the traditional delta-hedge still leads to perfect replication in a Black - Scholes market with transaction costs and instantaneous adverse selection. However, if we write the delta hedge equation
\begin{equation}
L_T = \frac{\partial v}{\partial p}(t,p_t),
\end{equation}
then that leads to the identity 
\begin{equation}
l_t = \sigma p_t \frac{\partial^2 v}{\partial p^2}(t,p_t),
\end{equation}
which in particular implies that delta-hedging \emph{negative gamma} options can only be done with \emph{limit orders}, while delta-hedging \emph{positive gamma} options is done with \emph{market orders}.

For an intuitive example of this phenomenon consider a call option. It has positive gamma: in order to delta-hedge this option, one needs to buy when the price goes up and sell when the price goes down. But, because of adverse selection, the price tends to move down when you buy with limit orders and up when you sell with limit orders. Therefore, market orders are to be preferred to delta-hedge a call option. However, if you are selling a call option then the opposite of a call option must be delta-hedged, which can be done with limit orders.

In addition to computing prices and delta-hedging ratios under transaction costs and instantaneous adverse selection, this theory suggests an execution strategy by quantifying when limit or market orders should be used.

\appendix

\section{NASDAQ data}

Our raw data consists of NASDAQ's ITCH data files, which include all visible limit and market orders, as well as market orders that execute hidden liquidity. We clean up the data by removing 'special deals', trades that happen within the bid-ask spread, and market orders that execute hidden liquidity. This leaves us with only visible market orders hitting visible limit orders at the best bid or ask price. While this is the natural state of the data, one may wish to reconstitute parent orders, in which case not all trades would happen at the best bid or ask price. We propose a method for doing so later in the appendix.

Because the data does not provide the identity of the different traders, we work by aggregating all the passive trades. This means we parse all the trades as viewed from the limit order's perspective, construct the aggregate portfolio, and consider that as our 'trader'. The figures included in this paper were produced using the data for Coca Cola (KO) on 18/04/13. A full cross-section of 120 stocks used in the recent ECB study \cite{ECB} was considered over multiple days to test for robustness of our results. We removed some of the stocks from certain tables when the number of trades was not sufficiently large to run the corresponding analysis.

\begin{figure}[htbp]
	\centering
		\includegraphics[width=1.00\textwidth]{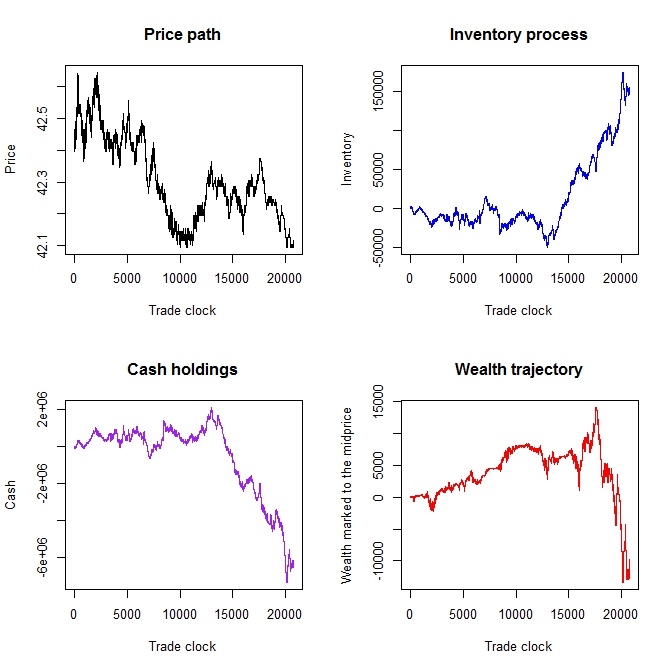}
	\caption{Example : Coca Cola (KO) stock on 18/04/13. Inventory, cash and wealth are those of the aggregate liquidity provider.}
	\label{fig:Empirical_Summary}
\end{figure}

\subsection{Ignored trades}
For the sake of completeness, we include a table of how many executions were discarded for each stock. They correspond to symbols 'C' and 'P' in the NASDAQ ITCH messaging convention. Type 'C' messages correspond to special deals while type 'P' messages to trades that execute against a hidden limit order. Note that -for some stocks- hidden executions are very important.

\begin{center}
\tiny{

  \begin{longtable}{|l c c|l c c |}
			stock symbol & percent type 'C' & percent type 'P' & stock symbol & percent type 'C' & percent type 'P'  \\
				\hline 
					 AA       & 1.4\% & 16.6\% &  FCN      & 0.3\% & 24\% \\ 
					 AAPL     & 0.4\% & 36.3\% &  FFIC     & 0\% & 24.2\% \\ 
					 ADBE     & 1.3\% & 7.5\% &  FL       & 1.4\% & 6.2\% \\ 
					 AGN      & 1.2\% & 12.9\% &  FMER     & 1.5\% & 8.4\% \\ 
					 AINV     & 1.4\% & 18.8\% &  FPO      & 0.2\% & 24.1\% \\ 
					 AMAT     & 1.8\% & 14.2\% &  FRED     & 0.3\% & 7.4\% \\ 
					 AMED     & 0.7\% & 13.6\% &  FULT     & 1.6\% & 8.5\% \\ 
					 AMGN     & 1.5\% & 11.8\% &  GAS      & 1.1\% & 18.3\% \\ 
					 AMZN     & 0.5\% & 25.9\% &  GE       & 2.7\% & 9.1\% \\ 
					 ANGO     & 0.3\% & 19\% &  GILD     & 1.4\% & 9.5\% \\ 
					 APOG     & 0.5\% & 15.2\% &  GLW      & 1.6\% & 13.4\% \\ 
					 ARCC     & 4.5\% & 9.9\% &  GOOG     & 0.9\% & 31\% \\ 
					 AXP      & 2.1\% & 13.8\% &  GPS      & 2.2\% & 7.8\% \\ 
					 AYI      & 0.7\% & 26.8\% &  HON      & 2.5\% & 13.6\% \\ 
					 AZZ      & 0.8\% & 17.6\% &  HPQ      & 2.3\% & 5.1\% \\ 
					 BAS      & 0.9\% & 12.2\% &  IMGN     & 0.8\% & 13.4\% \\ 
					 BHI      & 2.8\% & 8.2\% &  INTC     & 2.2\% & 11.9\% \\ 
					 BIIB     & 0.8\% & 23.7\% &  IPAR     & 0\% & 25.5\% \\ 
					 BRCM     & 2.6\% & 6.7\% &  ISIL     & 1\% & 11.3\% \\ 
					 BRE      & 1.6\% & 14.3\% &  ISRG     & 0.3\% & 27.3\% \\ 
					 BXS      & 5.3\% & 15.7\% &  JKHY     & 0.8\% & 15.1\% \\ 
					 BZ       & 4.9\% & 12.1\% &  KMB      & 1.7\% & 17.3\% \\ 
					 CB       & 1.6\% & 11.2\% &  KR       & 1.7\% & 8.1\% \\ 
					 CBEY     & 0.5\% & 8.7\% &  LANC     & 1\% & 23.4\% \\ 
					 CBT      & 0.4\% & 20.1\% &  LECO     & 0.3\% & 12.2\% \\ 
					 CBZ      & 1.4\% & 8.8\% &  LPNT     & 0.6\% & 20.5\% \\ 
					 CDR      & 2.9\% & 15.2\% &  LSTR     & 0.1\% & 12.2\% \\ 
					 CELG     & 0.8\% & 25\% &  MAKO     & 1.6\% & 12\% \\ 
					 CETV     & 0\% & 5.3\% &  MANT     & 0.3\% & 15.2\% \\ 
					 CKH      & 1.4\% & 19.1\% &  MDCO     & 0.6\% & 15\% \\ 
					 CMCSA    & 1.7\% & 10.6\% &  MELI     & 0.2\% & 28.9\% \\ 
					 CNQR     & 0.8\% & 16.3\% &  MIG      & 2\% & 15.9\% \\ 
					 COO      & 0.3\% & 23.9\% &  MMM      & 1.7\% & 14.4\% \\ 
					 COST     & 1.3\% & 6.4\% &  MOD      & 0.7\% & 16.5\% \\ 
					 CPSI     & 0.3\% & 41.1\% &  MOS      & 2\% & 11.2\% \\ 
					 CPWR     & 2.2\% & 8.9\% &  MRTN     & 0\% & 18.5\% \\ 
					 CR       & 0.6\% & 9.3\% &  MXWL     & 1.1\% & 14\% \\ 
					 CRI      & 0.6\% & 12.9\% &  NSR      & 1.4\% & 17.7\% \\ 
					 CSCO     & 1.1\% & 8.9\% &  NUS      & 1.7\% & 15.8\% \\ 
					 CSE      & 1.2\% & 16.7\% &  NXTM     & 1.6\% & 21.9\% \\ 
					 CSL      & 0.5\% & 27\% &  PBH      & 0.7\% & 11.4\% \\ 
					 CTRN     & 1.5\% & 23.8\% &  PFE      & 2.7\% & 9.7\% \\ 
					 CTSH     & 1.7\% & 12.2\% &  PG       & 3.1\% & 7\% \\ 
					 DCOM     & 0.8\% & 16\% &  PNC      & 1.5\% & 11.8\% \\ 
					 DELL     & 3.2\% & 12\% &  PNY      & 1\% & 8\% \\ 
					 DIS      & 2.2\% & 6.9\% &  PTP      & 0.4\% & 15.4\% \\ 
					 DK       & 1.8\% & 17.9\% &  RIGL     & 1\% & 12.4\% \\ 
					 DOW      & 2.1\% & 6.7\% &  ROC      & 1.7\% & 11.7\% \\ 
					 EBAY     & 2.1\% & 11.6\% &  ROCK     & 0.8\% & 8.2\% \\ 
					 ESRX     & 2.1\% & 8.5\% &  SF       & 0.3\% & 14.8\% \\ 
					 EWBC     & 2.1\% & 9.8\% &  SFG      & 1.6\% & 50.7\% \\ 
					          &       &       &  SWN      & 2.3\% & 9.8\% 
					\end{longtable}
	{Table 4. Proportion of ignored trades. Type `C' correspond to `special deals' while type `P' execute against hidden limit orders.}
}
\end{center}

\subsection{Reconstructing parent orders}

Throughout the paper we assume that all trades happen at the best bid or ask price. In general however, a small percentage of trades do go deeper into the book and consume liquidity from the next bid and ask prices. Unfortunately, the raw ITCH data does not indicate how to reconstruct these `parent orders' from the sequence of messages printed in the file. We propose a reconstruction heuristic and, for the sake of completeness, present our empirical results in the general case.

\begin{center}
\tiny{
\begin{longtable}{|l|c|c|c|c|}
			stock symbol & total nb of trades & with price impact & without price impact & reverse price impact  \\
				\hline
					  AA       & 1330 & 324 (24.36\%) & 1005 & 1 (0.08\%) \\ 
					 AAPL     & 29496 & 15979 (54.17\%) & 11194 & 2323 (7.88\%) \\ 
					 ADBE     & 4724 & 2250 (47.62\%) & 2326 & 148 (3.14\%) \\ 
					 AGN      & 1523 & 1053 (69.13\%) & 314 & 156 (10.25\%) \\ 
					 AINV     & 785 & 198 (25.22\%) & 582 & 5 (0.64\%) \\ 
					 AMAT     & 2632 & 522 (19.83\%) & 2105 & 5 (0.19\%) \\ 
					 AMED     & 495 & 195 (39.39\%) & 276 & 24 (4.85\%) \\ 
					 AMGN     & 9832 & 6369 (64.77\%) & 2918 & 545 (5.55\%) \\ 
					 AMZN     & 9101 & 5746 (63.13\%) & 2717 & 638 (7.02\%) \\ 
					 ANGO     & 341 & 128 (37.53\%) & 196 & 17 (4.99\%) \\ 
					 APOG     & 675 & 407 (60.29\%) & 213 & 55 (8.15\%) \\ 
					 ARCC     & 926 & 246 (26.56\%) & 676 & 4 (0.44\%) \\ 
					 AXP      & 8147 & 3985 (48.91\%) & 3814 & 348 (4.28\%) \\ 
					 AYI      & 663 & 437 (65.91\%) & 163 & 63 (9.51\%) \\ 
					 AZZ      & 102 & 63 (61.76\%) & 21 & 18 (17.65\%) \\ 
					 BAS      & 393 & 268 (68.19\%) & 87 & 38 (9.67\%) \\ 
					 BHI      & 5966 & 3444 (57.72\%) & 2165 & 357 (5.99\%) \\ 
					 BIIB     & 3633 & 2256 (62.09\%) & 1055 & 322 (8.87\%) \\ 
					 BRCM     & 6849 & 3269 (47.72\%) & 3452 & 128 (1.87\%) \\ 
					 BRE      & 896 & 486 (54.24\%) & 340 & 70 (7.82\%) \\ 
					 BXS      & 495 & 272 (54.94\%) & 200 & 23 (4.65\%) \\ 
					 BZ       & 183 & 110 (60.1\%) & 51 & 22 (12.03\%) \\ 
					 CB       & 1393 & 919 (65.97\%) & 371 & 103 (7.4\%) \\ 
					 CBEY     & 404 & 168 (41.58\%) & 200 & 36 (8.92\%) \\ 
					 CBT      & 527 & 327 (62.04\%) & 132 & 68 (12.91\%) \\ 
					 CBZ      & 100 & 45 (45\%) & 47 & 8 (8\%) \\ 
					 CDR      & 113 & 58 (51.32\%) & 51 & 4 (3.54\%) \\ 
					 CELG     & 5602 & 3737 (66.7\%) & 1332 & 533 (9.52\%) \\ 
					 CETV     & 157 & 72 (45.85\%) & 75 & 10 (6.37\%) \\ 
					 CKH      & 192 & 122 (63.54\%) & 44 & 26 (13.55\%) \\ 
					 CMCSA    & 6410 & 2402 (37.47\%) & 3969 & 39 (0.61\%) \\ 
					 CNQR     & 769 & 465 (60.46\%) & 238 & 66 (8.59\%) \\ 
					 COO      & 202 & 125 (61.88\%) & 51 & 26 (12.88\%) \\ 
					 COST     & 5473 & 3483 (63.63\%) & 1687 & 303 (5.54\%) \\ 
					 CPSI     & 297 & 164 (55.21\%) & 102 & 31 (10.44\%) \\ 
					 CPWR     & 686 & 273 (39.79\%) & 407 & 6 (0.88\%) \\ 
					 CR       & 493 & 298 (60.44\%) & 146 & 49 (9.94\%) \\ 
					 CRI      & 703 & 483 (68.7\%) & 151 & 69 (9.82\%) \\ 
					 CSCO     & 5025 & 1315 (26.16\%) & 3706 & 4 (0.08\%) \\ 
					 CSE      & 486 & 168 (34.56\%) & 316 & 2 (0.42\%) \\ 
					 CSL      & 338 & 187 (55.32\%) & 108 & 43 (12.73\%) \\ 
					 CTRN     & 112 & 62 (55.35\%) & 39 & 11 (9.83\%) \\ 
					 CTSH     & 4924 & 2941 (59.72\%) & 1770 & 213 (4.33\%) \\ 
					 DCOM     & 248 & 132 (53.22\%) & 84 & 32 (12.91\%) \\ 
					 DELL     & 1162 & 198 (17.03\%) & 963 & 1 (0.09\%) \\ 
					 DIS      & 5801 & 2900 (49.99\%) & 2762 & 139 (2.4\%) \\ 
					 DK       & 868 & 535 (61.63\%) & 219 & 114 (13.14\%) \\ 
					 DOW      & 3552 & 1679 (47.26\%) & 1807 & 66 (1.86\%) \\ 
					 EBAY     & 21407 & 11430 (53.39\%) & 9191 & 786 (3.68\%) \\ 
					 ESRX     & 5771 & 2957 (51.23\%) & 2595 & 219 (3.8\%) \\ 
					 EWBC     & 3000 & 1328 (44.26\%) & 1599 & 73 (2.44\%) \\ 
					 FCN      & 202 & 134 (66.33\%) & 53 & 15 (7.43\%) \\ 
					 FFIC     & 150 & 75 (50\%) & 58 & 17 (11.34\%) \\ 
					 FL       & 2183 & 1099 (50.34\%) & 1026 & 58 (2.66\%) \\ 
					 FMER     & 2015 & 751 (37.27\%) & 1233 & 31 (1.54\%) \\ 
					 FPO      & 207 & 145 (70.04\%) & 50 & 12 (5.8\%) \\ 
					 FRED     & 530 & 281 (53.01\%) & 217 & 32 (6.04\%) \\ 
					 FULT     & 1626 & 454 (27.92\%) & 1161 & 11 (0.68\%) \\ 
					 GAS      & 494 & 302 (61.13\%) & 132 & 60 (12.15\%) \\ 
					 GE       & 3351 & 913 (27.24\%) & 2434 & 4 (0.12\%) \\ 
					 GILD     & 9700 & 4707 (48.52\%) & 4704 & 289 (2.98\%) \\ 
					 GLW      & 1703 & 399 (23.42\%) & 1303 & 1 (0.06\%) \\ 
					 GOOG     & 6745 & 3766 (55.83\%) & 2366 & 613 (9.09\%) \\ 
					 GPS      & 3446 & 1809 (52.49\%) & 1540 & 97 (2.82\%) \\ 
					 HON      & 3622 & 2151 (59.38\%) & 1290 & 181 (5\%) \\ 
					 HPQ      & 4337 & 1819 (41.94\%) & 2498 & 20 (0.47\%) \\ 
					 IMGN     & 1461 & 705 (48.25\%) & 674 & 82 (5.62\%) \\ 
					 INTC     & 6863 & 1190 (17.33\%) & 5666 & 7 (0.11\%) \\ 
					 IPAR     & 138 & 66 (47.82\%) & 49 & 23 (16.67\%) \\ 
					 ISIL     & 1002 & 381 (38.02\%) & 612 & 9 (0.9\%) \\ 
					 ISRG     & 2015 & 1165 (57.81\%) & 654 & 196 (9.73\%) \\ 
					 JKHY     & 721 & 421 (58.39\%) & 240 & 60 (8.33\%) \\ 
					 KMB      & 2044 & 1394 (68.19\%) & 435 & 215 (10.52\%) \\ 
					 KR       & 2848 & 1251 (43.92\%) & 1566 & 31 (1.09\%) \\ 
					 LANC     & 168 & 99 (58.92\%) & 48 & 21 (12.5\%) \\ 
					 LECO     & 807 & 524 (64.93\%) & 219 & 64 (7.94\%) \\ 
					 LPNT     & 1401 & 821 (58.6\%) & 465 & 115 (8.21\%) \\ 
					 LSTR     & 938 & 571 (60.87\%) & 278 & 89 (9.49\%) \\ 
					 MAKO     & 756 & 337 (44.57\%) & 392 & 27 (3.58\%) \\ 
					 MANT     & 364 & 201 (55.21\%) & 124 & 39 (10.72\%) \\ 
					 MDCO     & 1219 & 750 (61.52\%) & 385 & 84 (6.9\%) \\ 
					 MELI     & 665 & 354 (53.23\%) & 240 & 71 (10.68\%) \\ 
					 MIG      & 302 & 71 (23.5\%) & 226 & 5 (1.66\%) \\ 
					 MMM      & 3212 & 2058 (64.07\%) & 874 & 280 (8.72\%) \\ 
					 MOS      & 2411 & 1541 (63.91\%) & 681 & 189 (7.84\%) \\ 
					 MRTN     & 307 & 141 (45.92\%) & 141 & 25 (8.15\%) \\ 
					 MXWL     & 327 & 175 (53.51\%) & 144 & 8 (2.45\%) \\ 
					 NSR      & 464 & 256 (55.17\%) & 157 & 51 (11\%) \\ 
					 NUS      & 1030 & 625 (60.67\%) & 295 & 110 (10.68\%) \\ 
					 NXTM     & 369 & 140 (37.94\%) & 205 & 24 (6.51\%) \\ 
					 PBH      & 105 & 57 (54.28\%) & 28 & 20 (19.05\%) \\ 
					 PFE      & 3729 & 1310 (35.13\%) & 2410 & 9 (0.25\%) \\ 
					 PG       & 7489 & 3742 (49.96\%) & 3557 & 190 (2.54\%) \\ 
					 PNC      & 3171 & 2000 (63.07\%) & 982 & 189 (5.97\%) \\ 
					 PNY      & 428 & 248 (57.94\%) & 119 & 61 (14.26\%) \\ 
					 PTP      & 462 & 291 (62.98\%) & 125 & 46 (9.96\%) \\ 
					 RIGL     & 1038 & 371 (35.74\%) & 652 & 15 (1.45\%) \\ 
					 ROC      & 811 & 521 (64.24\%) & 181 & 109 (13.45\%) \\ 
					 ROCK     & 531 & 214 (40.3\%) & 295 & 22 (4.15\%) \\ 
					 SF       & 367 & 241 (65.66\%) & 76 & 50 (13.63\%) \\ 
					 SFG      & 149 & 96 (64.42\%) & 29 & 24 (16.11\%) \\ 
					 SWN      & 6615 & 3604 (54.48\%) & 2787 & 224 (3.39\%) 
\end{longtable}
{Table 1-bis. Same table as Table 1 with reconstruction of parent trades.}
	
\begin{longtable}{|l|c|c|c|}
			stock symbol & relative error & friction ratio & net P\&L  \\
				\hline
					 AA       & 53\% & 72\% & 5952 \$ \\ 
					 AAPL     & 9\% & 27\% & -1776517 \$ \\ 
					 ADBE     & 1\% & 101\% & -5748 \$ \\ 
					 AGN      & 33\% & 101\% & 154 \$ \\ 
					 AINV     & 71\% & 58\% & 1049 \$ \\ 
					 AMAT     & 38\% & 55\% & -15810 \$ \\ 
					 AMED     & 111\% & 61\% & 181 \$ \\ 
					 AMGN     & 6\% & 76\% & -54822 \$ \\ 
					 AMZN     & 67\% & 38\% & -50457 \$ \\ 
					 ANGO     & 74\% & 72\% & 127 \$ \\ 
					 APOG     & 65\% & 61\% & 945 \$ \\ 
					 ARCC     & 14\% & 78\% & 2689 \$ \\ 
					 AXP      & 3\% & 89\% & -37167 \$ \\ 
					 AYI      & 270\% & 69\% & 339 \$ \\ 
					 AZZ      & 162\% & 50\% & 57 \$ \\ 
					 BAS      & 4\% & 111\% & -1594 \$ \\ 
					 BHI      & 1\% & 109\% & 44562 \$ \\ 
					 BIIB     & 43\% & 49\% & -19267 \$ \\ 
					 BRCM     & 2\% & 96\% & 23825 \$ \\ 
					 BRE      & 5\% & 79\% & -3915 \$ \\ 
					 BXS      & 2\% & 97\% & -134 \$ \\ 
					 BZ       & 19\% & 120\% & -210 \$ \\ 
					 CB       & 49\% & 80\% & -1420 \$ \\ 
					 CBEY     & 46\% & 66\% & 397 \$ \\ 
					 CBT      & 12\% & 86\% & 747 \$ \\ 
					 CBZ      & 660\% & 73\% & -5 \$ \\ 
					 CDR      & 23\% & 96\% & -264 \$ \\ 
					 CELG     & 46\% & 65\% & -12114 \$ \\ 
					 CETV     & 64\% & 68\% & 106 \$ \\ 
					 CKH      & 22\% & 62\% & -1278 \$ \\ 
					 CMCSA    & 6\% & 89\% & 25597 \$ \\ 
					 CNQR     & 9\% & 68\% & 5833 \$ \\ 
					 COO      & 35\% & 60\% & 1227 \$ \\ 
					 COST     & 5\% & 78\% & -33351 \$ \\ 
					 CPSI     & 43\% & 77\% & 534 \$ \\ 
					 CPWR     & 39\% & 88\% & 403 \$ \\ 
					 CR       & 5\% & 85\% & -1441 \$ \\ 
					 CRI      & 13\% & 103\% & -242 \$ \\ 
					 CSCO     & 37\% & 82\% & 15076 \$ \\ 
					 CSE      & 1513\% & 88\% & 270 \$ \\ 
					 CSL      & 95\% & 64\% & -246 \$ \\ 
					 CTRN     & 93\% & 46\% & 162 \$ \\ 
					 CTSH     & 4\% & 88\% & 16442 \$ \\ 
					 DCOM     & 18\% & 90\% & 205 \$ \\ 
					 DELL     & 70\% & 52\% & -7703 \$ \\ 
					 DIS      & 1\% & 103\% & 21357 \$ \\ 
					 DK       & 21\% & 79\% & 2473 \$ \\ 
					 DOW      & 43\% & 107\% & 916 \$ \\ 
					 EBAY     & 57\% & 91\% & 6931 \$ \\ 
					 ESRX     & 1\% & 94\% & 34603 \$ \\ 
					 EWBC     & 2\% & 85\% & 31333 \$ \\ 
					 FCN      & 73\% & 78\% & 119 \$ \\ 
					 FFIC     & 39\% & 45\% & 415 \$ \\ 
					 FL       & 0\% & 100\% & -9313 \$ \\ 
					 FMER     & 4\% & 91\% & -4523 \$ \\ 
					 FPO      & 73\% & 150\% & -125 \$ \\ 
					 FRED     & 186\% & 71\% & 84 \$ \\ 
					 FULT     & 17\% & 68\% & -4258 \$ \\ 
					 GAS      & 54\% & 76\% & 365 \$ \\ 
					 GE       & 35\% & 91\% & -4446 \$ \\ 
					 GILD     & 1\% & 95\% & -57616 \$ \\ 
					 GLW      & 37\% & 61\% & -12927 \$ \\ 
					 GOOG     & 11\% & 40\% & -211606 \$ \\ 
					 GPS      & 11\% & 109\% & 3787 \$ \\ 
					 HON      & 6\% & 117\% & 13242 \$ \\ 
					 HPQ      & 6\% & 106\% & -18432 \$ \\ 
					 IMGN     & 18\% & 65\% & -5328 \$ \\ 
					 INTC     & 22\% & 65\% & 111332 \$ \\ 
					 IPAR     & 52\% & 28\% & -317 \$ \\ 
					 ISIL     & 1473\% & 90\% & 11 \$ \\ 
					 ISRG     & 283\% & 52\% & -6182 \$ \\ 
					 JKHY     & 9\% & 78\% & -2198 \$ \\ 
					 KMB      & 10\% & 97\% & 25836 \$ \\ 
					 KR       & 43\% & 108\% & -846 \$ \\ 
					 LANC     & 87\% & 44\% & 484 \$ \\ 
					 LECO     & 86\% & 84\% & 273 \$ \\ 
					 LPNT     & 29\% & 67\% & 2487 \$ \\ 
					 LSTR     & 282\% & 73\% & 215 \$ \\ 
					 MAKO     & 16\% & 77\% & 2030 \$ \\ 
					 MANT     & 24\% & 82\% & 507 \$ \\ 
					 MDCO     & 1\% & 84\% & 8878 \$ \\ 
					 MELI     & 15\% & 43\% & 14892 \$ \\ 
					 MIG      & 50\% & 40\% & 413 \$ \\ 
					 MMM      & 53\% & 83\% & -4529 \$ \\ 
					 MOS      & 0\% & 96\% & -10969 \$ \\ 
					 MRTN     & 41\% & 46\% & -846 \$ \\ 
					 MXWL     & 5147\% & 97\% & -1 \$ \\ 
					 NSR      & 11\% & 89\% & 952 \$ \\ 
					 NUS      & 9\% & 54\% & 22755 \$ \\ 
					 NXTM     & 192\% & 58\% & 76 \$ \\ 
					 PBH      & 29\% & 56\% & 419 \$ \\ 
					 PFE      & 5\% & 90\% & 26718 \$ \\ 
					 PG       & 97\% & 101\% & 216 \$ \\ 
					 PNC      & 10\% & 113\% & -6350 \$ \\ 
					 PNY      & 9\% & 75\% & 1483 \$ \\ 
					 PTP      & 41\% & 79\% & 755 \$ \\ 
					 RIGL     & 14\% & 80\% & -2456 \$ \\ 
					 ROC      & 24\% & 68\% & 3124 \$ \\ 
					 ROCK     & 99\% & 56\% & 317 \$ \\ 
					 SF       & 2\% & 95\% & -1029 \$ \\ 
					 SFG      & 23\% & 62\% & 538 \$ \\ 
					 SWN      & 1\% & 104\% & -12890 \$
\end{longtable}
{Table 2-bis. Same as Table 2 with reconstruction of parent trades.}

\begin{longtable}{|l c|l c|l c|}
			stock symbol & prob rejection & stock symbol & prob rejection & stock symbol & prob rejection \\
				\hline
 AA       & 0.96728 &  CPSI     & 0.99755 &  IPAR     & 0.91515 \\ 
 AAPL     & 1 &  CPWR     & 0.99889 &  ISIL     & 0.82535 \\ 
 ADBE     & 0.99999 &  CR       & 0.99897 &  ISRG     & 0.99996 \\ 
 AGN      & 0.99931 &  CRI      & 0.99625 &  JKHY     & 0.99046 \\ 
 AINV     & 0.91669 &  CSCO     & 0.99999 &  KMB      & 0.99999 \\ 
 AMAT     & 0.99682 &  CSE      & 0.99838 &  KR       & 0.99999 \\ 
 AMED     & 0.95456 &  CSL      & 0.98687 &  LANC     & 0.99987 \\ 
 AMGN     & 0.99999 &  CTRN     & 0.99999 &  LECO     & 0.99981 \\ 
 AMZN     & 0.99999 &  CTSH     & 0.99999 &  LPNT     & 0.99999 \\ 
 ANGO     & 0.99997 &  DCOM     & 0.99999 &  LSTR     & 0.99991 \\ 
 APOG     & 0.98397 &  DELL     & 0.78661 &  MAKO     & 0.91648 \\ 
 ARCC     & 0.99519 &  DIS      & 0.99999 &  MANT     & 0.99888 \\ 
 AXP      & 0.99999 &  DK       & 0.99812 &  MDCO     & 0.99998 \\ 
 AYI      & 0.99354 &  DOW      & 0.99999 &  MELI     & 0.90214 \\ 
 AZZ      & 0.99754 &  EBAY     & 1 &  MIG      & 0.98378 \\ 
 BAS      & 0.99999 &  ESRX     & 1 &  MMM      & 0.90883 \\ 
 BHI      & 1 &  EWBC     & 0.99999 &  MOS      & 0.99999 \\ 
 BIIB     & 0.99999 &  FCN      & 0.99989 &  MRTN     & 0.99528 \\ 
 BRCM     & 0.99999 &  FFIC     & 0.99999 &  MXWL     & 0.99998 \\ 
 BRE      & 0.99718 &  FL       & 0.99999 &  NSR      & 0.96399 \\ 
 BXS      & 0.9992 &  FMER     & 0.90467 &  NUS      & 0.9958 \\ 
 BZ       & 0.99976 &  FPO      & 0.99999 &  NXTM     & 0.9997 \\ 
 CB       & 0.99999 &  FRED     & 0.99988 &  PBH      & 0.99717 \\ 
 CBEY     & 0.88292 &  FULT     & 0.99946 &  PFE      & 0.99999 \\ 
 CBT      & 0.98185 &  GAS      & 0.99863 &  PG       & 1 \\ 
 CBZ      & 0.99998 &  GE       & 0.99999 &  PNC      & 0.99999 \\ 
 CDR      & 0.99999 &  GILD     & 1 &  PNY      & 0.99559 \\ 
 CELG     & 0.99999 &  GLW      & 0.89184 &  PTP      & 0.99237 \\ 
 CETV     & 0.9999 &  GOOG     & 1 &  RIGL     & 0.99953 \\ 
 CKH      & 0.99927 &  GPS      & 0.99999 &  ROC      & 0.99412 \\ 
 CMCSA    & 0.99999 &  HON      & 0.99999 &  ROCK     & 0.99036 \\ 
 CNQR     & 0.98594 &  HPQ      & 0.99999 &  SF       & 0.99888 \\ 
 COO      & 0.92067 &  IMGN     & 0.98296 &  SFG      & 0.99996 \\ 
 COST     & 0.99992 &  INTC     & 0.99999 &  SWN      & 1 \\ 
\end{longtable}
{Table 3-bis. Same table as Table 3 with reconstruction of the parent trades.}
}
\end{center}

The discrete time self-financing equation in the general case takes the form
\begin{equation}
\Delta_n X = L_n \Delta_n p \pm c_n\left(\Delta_n L\right) + \Delta_n L \Delta_n p
\end{equation}
where $c_n$ is the transaction cost function associated to the order book. In the case where all the trades happen at the best bid or ask price, this function is equal to $c_n(l) = \frac{s_n}{2}|l|$.

The heuristic we employ to reconstruct parent orders in the ITCH data is as follows. We declare that multiple trades are children of the same parent order if the following three conditions are satisfied
\begin{enumerate}
\item They all have the exact same timestamp.
\item They are all in the same direction (Buy or Sell).
\item In between each trade message, there are no other messages for the same stock. In particular, the order book did not change in between two such consecutive messages.
\end{enumerate}
We believe this to be a rather conservative heuristic. Most trades still happen at the best bid or ask price.

The main effect of the reconstruction of parent orders is to reduce the number of trades where the midprice does not move. This can be seen on the table giving the proportion of trades with, without and with reverse price impact (Table 1-bis). Because of that, our main test is boosted as the instantaneous correlation between liquidity provider inventory and prices is more significant (Table 2-bis). The results of Table 3-bis reflect the higher amount of trades with price impact: the ratio between the adverse selection and the transaction costs increases and sometimes becomes bigger than $1$.

To conclude, reconstructing parent orders complicates the theory, but improves all statistical measures of instantaneous adverse selection.

\section{Proof of the continuous self-financing equation}
The aim of this section is to derive formula \eqref{cont_self_financing} from its discrete version \eqref{self_financing_discrete}. In the process, we shall also derive the continuous-time analog of the instantaneous adverse selection constraint. The key is to let the tick size vanish, assume that the bid-ask spread vanishes with the tick size, and assume that the price and inventory converge to diffusion limits. 

\subsubsection{The spread}
This is conceptually the trickiest part of the argument. If we assume the spread to be of a fixed size, then the transaction cost component of our wealth equation explodes. This is not what we observe empirically in the data however, and we must adapt our set of hypotheses accordingly.

Our claim is that the spread vanishes in order $1/\sqrt{N}$. Our argument for this is to say that the spread is -both empirically and theoretically- a multiple of the price grid and hence the price jumps. Given that in a continuous time model, price jumps are of order $1/\sqrt{N}$ it is consistent to have the spread scale in $1/\sqrt{N}$. We call this hypothesis a 'vanishing bid-ask spread' and denote in the continuous limit by $s_t$ the spread measured in the price-change unit.

\vskip 4pt
The main technical tool we use is the functional law of large number for a discretized process by Jacod and Protter \cite{Jacod}. Let $\phi_{\sigma^2}$ denote the density function of the Gaussian distribution with mean $0$ and variance $\sigma^2$.

\begin{theorem}[(7.2.2) from \cite{Jacod}]
\label{thm_Jacod}
Let $(t, y) \rightarrow F_t(y)$ be an adapted random function that is a.s. continuous in $(t,y)$ and verifies the growth condition $F_t(y) \le C y^2$ for some constant $C$.
Then we have the following convergence u.c.p. as $N\rightarrow \infty$ for any continuous It\^o process $Y$:
\begin{equation*}
\frac{1}{N}\sum_{n=1}^{\lfloor Nt \rfloor} F_{n/N}\left({\sqrt{N}(Y_{(n+1)/N} - Y_{n/N})}\right) \rightarrow \int_0^t \int F_s(y) \phi_{\sigma^2_s}(y)dy \; ds
\end{equation*}
where $\sigma^2_t = \frac{d[Y,Y]_t}{dt}$. 
\end{theorem}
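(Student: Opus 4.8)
The plan is to prove the stated u.c.p.\ convergence by the standard localization-plus-martingale-decomposition scheme for normalized functionals of It\^o processes. First I would reduce to the case of bounded coefficients: writing $dY_t = b_t\,dt + \sigma_t\,dW_t$, a localization argument (stopping $Y$ at the first exit time of $b$, $\sigma$ and $\sigma^{-1}$ from large compacts) lets me assume, without loss of generality for u.c.p.\ convergence, that $b$ and $\sigma$ are bounded and that $\sigma^2_t = d[Y,Y]_t/dt$ is bounded and continuous. Throughout I write $\Delta_n^N Y = Y_{(n+1)/N} - Y_{n/N}$ and $\xi_n^N = F_{n/N}\!\left(\sqrt{N}\,\Delta_n^N Y\right)$, and split the sum into a predictable (drift) part and a martingale (fluctuation) part:
\begin{equation}
\frac{1}{N}\sum_{n=1}^{\lfloor Nt\rfloor}\xi_n^N = \underbrace{\frac{1}{N}\sum_{n=1}^{\lfloor Nt\rfloor}\E\!\left[\xi_n^N\,\middle|\,\F_{n/N}\right]}_{A^N_t} + \underbrace{\frac{1}{N}\sum_{n=1}^{\lfloor Nt\rfloor}\left(\xi_n^N - \E\!\left[\xi_n^N\,\middle|\,\F_{n/N}\right]\right)}_{M^N_t}.
\end{equation}

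The fluctuation term $M^N$ is a discrete martingale, and I expect it to vanish by a second-moment estimate. The quadratic-growth hypothesis (read as $|F_t(y)|\le Cy^2$) gives $(\xi_n^N)^2 \le C^2 N^2 (\Delta_n^N Y)^4$, and the Burkholder--Davis--Gundy inequality for the continuous It\^o process $Y$ yields $\E[(\Delta_n^N Y)^4\,|\,\F_{n/N}] \le K N^{-2}$ under the bounded-coefficient reduction. Hence each conditional variance is $O(1)$, the $L^2$-norm of $M^N_t$ is controlled by $\frac{1}{N^2}\,\lfloor Nt\rfloor\, C^2 K = O(t/N)$ by orthogonality of the martingale differences, and Doob's maximal inequality then forces $M^N \to 0$ u.c.p.

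The substance of the proof is identifying the limit of the predictable term $A^N$. The key step is the local Gaussian approximation: conditionally on $\F_{n/N}$, the normalized increment $\sqrt{N}\,\Delta_n^N Y$ is asymptotically $N(0,\sigma^2_{n/N})$. Indeed, the drift contributes $\sqrt{N}\int_{n/N}^{(n+1)/N} b_u\,du = O(N^{-1/2})$, while freezing $\sigma$ at the left endpoint replaces the diffusive part by $\sigma_{n/N}\sqrt{N}\,(W_{(n+1)/N}-W_{n/N})$, which is exactly centered Gaussian with variance $\sigma^2_{n/N}$. After the change of variables $y = \sigma_{n/N}z$ this gives
\begin{equation}
\E\!\left[\xi_n^N\,\middle|\,\F_{n/N}\right] \approx \int F_{n/N}(\sigma_{n/N}z)\,\phi_1(z)\,dz = \int F_{n/N}(y)\,\phi_{\sigma^2_{n/N}}(y)\,dy =: g_{n/N}.
\end{equation}
Setting $g_s = \int F_s(y)\phi_{\sigma^2_s}(y)\,dy$, the a.s.\ continuity of $(s,y)\mapsto F_s(y)$, the continuity of $s\mapsto\sigma^2_s$, and dominated convergence (licensed by $|F_s(y)|\le Cy^2$ against the all-moment Gaussian weight) make $s\mapsto g_s$ continuous, so the Riemann sum $\frac{1}{N}\sum_{n\le\lfloor Nt\rfloor} g_{n/N}$ converges to $\int_0^t g_s\,ds$, the claimed limit.

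The hard part will be making the approximation $\E[\xi_n^N\,|\,\F_{n/N}]\approx g_{n/N}$ quantitatively uniform enough that the accumulated error, once multiplied by $\frac{1}{N}$ and summed over the $\lfloor Nt\rfloor$ increments, still vanishes. This requires controlling two error sources simultaneously under only quadratic growth of $F$: the non-Gaussian drift contribution and the variation of $\sigma$ across each mesh interval. I would handle these by a freezing argument, comparing $Y$ on $[n/N,(n+1)/N]$ with the Gaussian proxy $\sigma_{n/N}(W_{\cdot} - W_{n/N})$ and bounding $\E\!\left[\,\bigl|F_{n/N}(\sqrt N\,\Delta_n^N Y) - F_{n/N}(\sigma_{n/N}\sqrt N\,(W_{(n+1)/N}-W_{n/N}))\bigr|\,\middle|\,\F_{n/N}\right]$ through the modulus of continuity of $F$ together with the fourth-moment control above. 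Because $F$ is assumed only continuous in $(t,y)$ rather than globally Lipschitz, a uniform-integrability (localization) argument is needed here in place of a direct Taylor expansion; this is the methodology of \cite{Jacod}, on which the fully rigorous estimates rest.
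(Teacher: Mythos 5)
The paper gives no proof of this statement: it is imported verbatim as Theorem (7.2.2) of Jacod and Protter \cite{Jacod} and used as a black box, so there is no internal argument to compare yours against. Your outline --- localization to bounded coefficients, splitting off the discrete martingale and killing it with the fourth-moment/BDG bound licensed by the quadratic growth of $F$, then identifying the compensator via the frozen-coefficient Gaussian approximation and a Riemann sum --- is the standard proof of this law of large numbers and is essentially the argument of the cited reference. Two small caveats: localizing $\sigma^{-1}$ is unnecessary and in fact loses generality (the statement still holds where $\sigma_s=0$, since $|F_s(y)|\le Cy^2$ forces $F_s(0)=0$ and the limiting integral degenerates to $0$ there), and the step you yourself flag as hard --- making the conditional Gaussian approximation uniform enough in $n$ under mere continuity of $F$, via truncation in $y$ plus uniform integrability --- is where essentially all the work lies, so as written this is a correct plan rather than a complete proof.
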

We proceed as follows:
\begin{enumerate}
\item We start from continuous time processes for the inventory $L$, price $p$ and bid-ask spread $s$ as our \emph{data}. 
\item By time discretization, we obtain the \emph{data} to plug into the discrete-time clearing equations listed in section \ref{sec:clearing_condition}, yielding our \emph{discrete time} output relationships.
\item Finally, we take the limit to obtain the diffusion limits of our discrete output to obtain our continuous-time relationships.
\end{enumerate}

In discrete time, the midprice is a pure-jump process, and therefore has finite variations. It is common on larger time scales to consider the price as `zoomed out' enough to be approximated by a diffusion process. Mathematically, this corresponds to a vanishing tick size. Recall that tick size is typically of the order of magnitude of the cent\footnote{Decibasis point for some exchanges in the foreign exchange market.}, that is $10^{-4}$ relative to the typical stock price. Given the relative roughness of the path of inventories when compared to prices it seems reasonable to also expect high-frequency inventories to be modeled by processes with infinite variation.

\subsection{Mathematical Setup}
Let $W$ and $W'$ be two Brownian motions with unspecified correlation structure. We consider two It\^o processes for the price $p$ and the inventory of $L$ a \emph{liquidity provider}:
\begin{equation}
	\begin{cases}
		p_t &= p_0 + \int_0^t \mu_u du + \int_0^t \sigma_u dW_u \\
		L_t &= L_0 + \int_0^t b_u du + \int_0^t l_u dW'_u
	\end{cases}
\end{equation}
where $p_0$ and $L_0$ are given initial conditions and $\mu$, $\sigma$, $b$ and $l$ are adapted continuous processes. Finally, we also assume the existence of an adapted continuous process $s$.

Now consider the discrete approximation $p^N_n = p_{n/N}$ and likewise for $L$, $\mu$, $\sigma$, $b$ and $l$. The interpretation is that $\frac{1}{\sqrt{N}}$ is the tick size, which we formally make vanish. For the bid-ask spread $s$, we define $s^N_n = \frac{1}{\sqrt{N}} s_{n/N}$ in line with our previous comments. Plugging these definitions into our discrete model leads to:

\begin{equation}
\label{fo:3discrete}
	\begin{cases}
		\Delta_n X^N = L^N_n \Delta_n p^N + \frac{s_{n/N}}{2} \frac{1}{\sqrt{N}}|\Delta_n L^N| + \Delta_n p^N \Delta_n L^N \\
		\Delta_n L^N \Delta_n p^N \le 0
	\end{cases}
\end{equation}
where the first equation is understood as the definition of the wealth $X^N$.

\subsection{Main result}
\begin{theorem}
Assuming that relations \eqref{fo:3discrete} hold for every $N\ge 1$, then the limit $\lim_{N\rightarrow \infty} X^N_{\lfloor Nt \rfloor}$ exists for the uniform convergence in probability and defines a process $X_t$ which together with the It\^o processes $p_t$ and $L_t$ satisfy the relationships:
\begin{equation}
\label{fo:3continuous}
	\begin{cases}
		dX_t = L_t dp_t + \frac{s_t l_t}{\sqrt{2 \pi}} dt +d[L,p]_t \\
		d[L,p]_t \le 0
	\end{cases}
\end{equation}
\end{theorem}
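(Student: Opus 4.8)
The plan is to decompose the discrete wealth into its three contributions and pass each to the limit separately. Writing $X^N_{\lfloor Nt\rfloor} = X^N_0 + A^N_t + B^N_t + C^N_t$, where the three sums collect the three terms on the right-hand side of the first relation in \eqref{fo:3discrete},
\[
A^N_t = \sum_{n=1}^{\lfloor Nt\rfloor} L^N_n \Delta_n p^N,\qquad B^N_t = \sum_{n=1}^{\lfloor Nt\rfloor} \frac{s_{n/N}}{2\sqrt N}\bigl|\Delta_n L^N\bigr|,\qquad C^N_t = \sum_{n=1}^{\lfloor Nt\rfloor} \Delta_n p^N \Delta_n L^N,
\]
I would show that $A^N$, $B^N$, $C^N$ each converge u.c.p. to the corresponding term in \eqref{fo:3continuous}. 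Since u.c.p. convergence is stable under finite sums, this establishes both the existence of the limiting process $X_t$ and the first identity of \eqref{fo:3continuous} at once.

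The frictionless and adverse-selection terms are handled by classical results that can be read off from \cite{Jacod}. For $A^N$, the summand $L^N_n = L_{n/N}$ is evaluated at the left endpoint, so $A^N$ is a non-anticipating Riemann--Stieltjes sum of an adapted continuous integrand against the It\^o process $p$, whence $A^N_t \to \int_0^t L_u\,dp_u$ u.c.p. For $C^N$, the realized covariation of two continuous It\^o processes converges u.c.p.\ to their quadratic covariation, giving $C^N_t \to [L,p]_t$.

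The delicate term is the transaction-cost sum $B^N$, and this is where I expect the real work. Using the vanishing-spread scaling $s^N_n = s_{n/N}/\sqrt N$ already built into \eqref{fo:3discrete}, I would rewrite
\[
B^N_t = \frac{1}{N}\sum_{n=1}^{\lfloor Nt\rfloor} \frac{s_{n/N}}{2}\,\bigl|\sqrt N\,(L_{(n+1)/N}-L_{n/N})\bigr|,
\]
which is precisely of the form treated by Theorem \ref{thm_Jacod} with $Y=L$ and $F_t(y)=\tfrac{s_t}{2}|y|$. Applying that theorem with $\sigma^2_u = d[L,L]_u/du = l_u^2$ yields
\[
B^N_t \longrightarrow \int_0^t \frac{s_u}{2}\Bigl(\int |y|\,\phi_{l_u^2}(y)\,dy\Bigr)\,du.
\]
The inner integral is the first absolute moment of a centred Gaussian, $\int |y|\,\phi_{l_u^2}(y)\,dy = |l_u|\sqrt{2/\pi}$, and since $\tfrac12\sqrt{2/\pi}=1/\sqrt{2\pi}$ the limit equals $\int_0^t s_u|l_u|/\sqrt{2\pi}\,du$; taking the diffusion coefficient $l_u\ge 0$ without loss of generality (its sign may be absorbed into $W'$) this is the stated term. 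The single point needing care is the growth hypothesis $F_t(y)\le Cy^2$ of Theorem \ref{thm_Jacod}: as $|y|$ violates it near the origin, I would either invoke the version of the functional law of large numbers valid under the bound $|F_t(y)|\le C(1+y^2)$, or truncate $F$ near $y=0$ and show the contribution of the small increments is negligible in the limit.

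For the sign constraint, the pointwise inequality $\Delta_n L^N \Delta_n p^N \le 0$ forces $t\mapsto C^N_t$ to be non-increasing for every $N$; because a u.c.p.\ limit of non-increasing processes is almost surely non-increasing (pass to a subsequence converging uniformly on compacts), the limit $t\mapsto[L,p]_t$ is non-increasing, i.e. $d[L,p]_t\le 0$. Assembling the three limits gives $dX_t = L_t\,dp_t + \tfrac{s_t l_t}{\sqrt{2\pi}}\,dt + d[L,p]_t$ together with $d[L,p]_t\le 0$, as claimed. I expect the transaction-cost term $B^N$ to be the main obstacle: justifying the $1/\sqrt N$ spread scaling and applying Theorem \ref{thm_Jacod} to the non-smooth integrand $|y|$ are the steps that are least automatic, whereas $A^N$ and $C^N$ reduce to standard convergence theorems.
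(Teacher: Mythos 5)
Your proposal follows essentially the same route as the paper: the convergences of $A^N$ to $\int_0^t L_u\,dp_u$ and of $C^N$ to $[L,p]_t$ are taken as standard, and the transaction-cost sum is rewritten as $\frac{1}{N}\sum \frac{s_{n/N}}{2}\bigl|\sqrt{N}\Delta_n L^N\bigr|$ and handled by Theorem \ref{thm_Jacod} with $F_t(y)=\frac{s_t}{2}|y|$, exactly as in the appendix (which also localizes to make $s$ bounded). Your added care about the growth hypothesis on $F$ and the monotone-limit argument for $d[L,p]_t\le 0$ only fills in details the paper leaves implicit.
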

\begin{proof}
Using  a localizing sequence of stopping times if needed, we can assume without any loss of generality that the process $s_t$ is bounded by a constant.
The convergence of the discrete approximations of $\int_0^t L_u dp_u$ and $[L,p]_t$ is plain, proving the second relationship.

For the last term of the self-financing equation, we have that
\begin{equation}
\frac{s^N_n}{2} \frac{1}{\sqrt{N}}|\Delta_n L^N| = \frac{1}{2 N} s_{n/N} |\sqrt{N} \Delta_n L^N|
\end{equation}
which allows us to apply Theorem \ref{thm_Jacod} with $F_t(y) = \frac{s_t}{2}|y|$ and $Y_t = L_t$. This proves the self-financing equation.
\end{proof}

\begin{remark}
Technically speaking, nothing prevents us from using the same limiting argument for the hidden part of the order book, simply replacing $p_t$ and $s_t$ by their `hidden' counterparts. Two practical problems appear however. First, measuring the hidden price and spread is difficult. Second, and more importantly, it is unclear by what to replace the price impact inequality, as adverse selection of hidden orders is rarely studied and, in any case, poorly understood.
\end{remark}

\subsection{The case of a liquidity taker}
By symmetry, the corresponding equations for the inventory and wealth of a liquidity taker are 
\begin{equation}
	\begin{cases}
		dX_t = L_t dp_t - \frac{s_t l_t}{\sqrt{2 \pi}} dt +d[L,p]_t \\
		d[L,p]_t \ge 0
	\end{cases}
\end{equation}
Unfortunately, as we already pointed out, both the equations for the liquidity provider and the liquidity taker are only \emph{necessary} conditions. Indeed, unlike with the standard self-financing equation, it is difficult to tell which processes $L$ and $p$ are admissible: we can only derive $X$ once $L$ and $p$ are given.

To give an example of why not all $L$ can be attained, assume the volume on the order book is finite. Then the volatility of $L$ must be bounded by the amount of volume available. Other factors that can come into play to determine which processes $L$ are actually attainable by market participants are: limit order fill rate, instantaneous price recovery and for market orders the ability to predict the next price jump. These factors will directly impact the volatility of $L$ and the possible correlation and quadratic covariation between $L$ and $p$.

Ultimately, supply and demand rule the price $p$ and volume $L$. $X$, however, stems from accounting rules.

\section{Derivation of the Black - Scholes PDE}
In this section we prove the pricing PDE \eqref{pricing_PDE} for a European option in a Black - Scholes model with transaction costs and price impact.

\subsection{Mathematical setup}
Let $W$ be a Brownian motion and $r$ the constant interest rate the cash account is subject to. Assume the midprice $p_t$ to satisfy the stochastic differential equation (SDE)
\begin{equation}
dp_t = \mu p_t dt + \sigma p_t dW_t,
\end{equation}
and let the spread $s_t$ be of the form $s(p_t)$ for some continuous function $s$. 
Given an adapted It\^o process 
\begin{equation}
L_t = L_0 + \int_0^t b_u du + \int_0^t l_u dW_u
\end{equation}
representing an agent's inventory, and a real $K_0$ representing his initial cash endowment, his wealth is defined as
\begin{equation}
X_t = L_0 p_0 + K_0 + \int_0^t L_u dp_u + \int_0^t \left(\sigma p_u - \frac{s(p_u)}{\sqrt{2\pi}}\right)l_u du + \int_0^t r (X_t - p_t L_t)dt
\end{equation}
where we have used equation \eqref{eq:continuous_signed} as $l_t$ is signed. The last term comes from the interest gained on the cash account $K_t = X_t - p_t L_t$.

The objective is to find $X_t$such that we have $X_T = f(p_T)$ at a given terminal time $T$ for a given payoff function $f$.

\subsection{The result}

\begin{theorem}
Let $f$ be a continuous function and $T>0$. Assume $v$ to be the solution of the PDE
\begin{equation}
\frac{\partial v}{\partial t}(t,p) + \frac{\sigma^2 p^2}{2}\left(\frac{s(p)\sqrt{2}}{\sigma p\sqrt{\pi}} - 1\right)\frac{\partial^2 v}{\partial p^2}(t,p) + r p \frac{\partial v}{\partial p}(t,p) = r v(t,p) 
\end{equation}
with terminal condition $v(T,p) = f(p)$. Then  $L_t = \frac{\partial v}{\partial p}(t,p_t)$, and the choice $K_0 = v(0,p_0) - \frac{\partial v}{\partial p}(0,p_0) p_0$ leads to $X_T = f(p_T)$. Furthermore, 
\begin{equation}
l_t = \sigma p_t \frac{\partial^2 v}{\partial p^2}(t,p_t).
\end{equation}
\end{theorem}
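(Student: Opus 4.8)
The plan is to run a verification argument in the spirit of the classical Black--Scholes replication proof, adapted to the corrected wealth dynamics \eqref{eq:continuous_signed}. The candidate replicating portfolio is $L_t = \frac{\partial v}{\partial p}(t,p_t)$, and I would introduce the candidate wealth process $Y_t = v(t,p_t)$. The goal is to show that, with this $L_t$ and the prescribed $K_0$, the actual wealth $X_t$ coincides with $Y_t$ for all $t$; evaluating at $t=T$ and using the terminal condition $v(T,p)=f(p)$ then yields $X_T=f(p_T)$, i.e.\ perfect replication.

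First I would identify $l_t$. Since the inventory is taken to be $L_t = \frac{\partial v}{\partial p}(t,p_t)$, applying It\^o's formula to $\frac{\partial v}{\partial p}(t,p_t)$ (which requires $v\in C^{1,3}$) produces a drift term together with the martingale part $\sigma p_t \frac{\partial^2 v}{\partial p^2}(t,p_t)\,dW_t$. Matching the diffusion coefficient against $l_t\,dW_t$ in the representation $dL_t = b_t\,dt + l_t\,dW_t$ immediately gives the claimed identity $l_t = \sigma p_t \frac{\partial^2 v}{\partial p^2}(t,p_t)$, which in particular fixes the sign of $l_t$ as the sign of the gamma, as announced.

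Next I would apply It\^o's formula to $Y_t = v(t,p_t)$ and compare it, term by term, with the wealth dynamics read off from \eqref{eq:continuous_signed} together with the interest contribution $r(X_t - p_t L_t)$. The $dW_t$ parts cancel automatically because both equal $\sigma p_t \frac{\partial v}{\partial p}\,dW_t$, and the drift $\mu p_t \frac{\partial v}{\partial p}$ coming from $L_t\,dp_t$ cancels on both sides as well. The heart of the computation is the remaining drift identity: the It\^o second-order term $\frac{1}{2}\sigma^2 p_t^2 \frac{\partial^2 v}{\partial p^2}$ combines with the friction term $-\left(\sigma p_t - \frac{s(p_t)}{\sqrt{2\pi}}\right)\sigma p_t \frac{\partial^2 v}{\partial p^2}$ (obtained after substituting the value of $l_t$) to yield exactly the effective coefficient $\frac{\sigma^2 p_t^2}{2}\left(\frac{s(p_t)\sqrt{2}}{\sigma p_t\sqrt{\pi}} - 1\right)\frac{\partial^2 v}{\partial p^2}$ appearing in the PDE; the elementary relation $2/\sqrt{2\pi} = \sqrt{2}/\sqrt{\pi}$ is what reconciles the two forms. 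Invoking the PDE together with the $r p\frac{\partial v}{\partial p} - rv$ terms then shows that the drift of $Y_t$ matches the drift of $X_t$.

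Finally, both $X_t$ and $Y_t$ are seen to satisfy the \emph{same} stochastic differential equation: because the interest term $r(X_t - p_t L_t)$ makes the wealth dynamics linear in the unknown process, this equation has a unique solution once $p$, $L$ and $l$ are fixed. The choice $K_0 = v(0,p_0) - \frac{\partial v}{\partial p}(0,p_0)p_0$ guarantees $X_0 = L_0 p_0 + K_0 = v(0,p_0) = Y_0$, so the two processes share the same initial condition and therefore agree for all $t$. I expect the main obstacle to be purely technical: justifying the two applications of It\^o's formula (hence the $C^{1,3}$ regularity of $v$, which one obtains from parabolic regularity theory provided the effective diffusion coefficient $\frac{s(p)\sqrt{2}}{\sigma p\sqrt{\pi}} - 1$ stays nonnegative so the backward PDE is genuinely parabolic) and ensuring the linear wealth SDE is well posed, both of which can be handled by a localization argument as in the previous proof.
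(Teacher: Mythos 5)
Your proposal is correct and follows essentially the same verification argument as the paper: choose $L_t=\frac{\partial v}{\partial p}(t,p_t)$, read off $l_t=\sigma p_t\frac{\partial^2 v}{\partial p^2}(t,p_t)$, apply It\^o's formula, and use the PDE to match the drift of the wealth process with that of $v(t,p_t)$ (the paper merely packages the interest term by discounting, writing $\tilde X_t=e^{-r(t-T)}X_t$ and showing $d\tilde X_t=d\bigl(e^{-r(t-T)}v(t,p_t)\bigr)$, rather than invoking uniqueness of the linear wealth SDE as you do). Your added remarks on the $C^{1,3}$ regularity of $v$ and the parabolicity of the effective diffusion coefficient are reasonable technical points that the paper leaves implicit.
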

\begin{proof}
Consider the process $\tilde{X}_t = e^{-r(t-T)}X_t$. Choosing the ``delta'' for the holdings $L_t$, namely 
\begin{equation}
L_t = \frac{\partial v}{\partial p}(t,p_t)
\end{equation}

leads to
\begin{equation}
l_t = \sigma p_t \frac{\partial^2 v}{\partial p^2}(t,p_t)
\end{equation}
and hence
\begin{align*}
d\tilde{X}_t &= e^{-r(t-T)}\left((\mu-r) p_t \frac{\partial v}{\partial p}(t,p_t) - \sigma^2 p^2_t \left(\frac{s(p_t)}{\sigma p_t\sqrt{2\pi}}  -1\right) \frac{\partial^2 v}{\partial p^2}(t,p_t)  \right)dt \\
						 &\quad+ e^{-r(t-T)}\frac{\partial v}{\partial p}(t,p_t) p_t \sigma dW_t \\
						 &= e^{-r(t-T)}\left(- r v(t,p_t) + \mu p_t \frac{\partial v}{\partial p}(t,p_t) + \frac{1}{2}\sigma^2 p^2_t\frac{\partial^2 v}{\partial p^2}(t,p_t) + \sigma p_t\frac{\partial v}{\partial t}(t,p_t)\right)dt  \\
						 &\quad+ e^{-r(t-T)}\frac{\partial v}{\partial p}(t,p_t) p_t \sigma dW_t \\
						 &= d(e^{-r(t-T)} v(t,p_t))
\end{align*}
As the initial values match, we have that $X_t = v(t,p_t)$. This concludes.
\end{proof}

\end{document}